\newcommand{\ipeFigSmallTheta}{1}
\newcommand{\ipeFigLoop}{2}
\newcommand{\ipeFigForward}{3}
\newcommand{\ipeFigSide}{4}
\newcommand{\ipeFigPositiveRouting}{5}
\newcommand{\ipeFigNegativeRouting}{6}
\newcommand{\ipeFigMemorylessNeg}{7}
\newcommand{\ipeFigOneMemoryNeg}{8}
\newcommand{\ipeFigLemmaForward}{9}
\newcommand{\ipeFigLemmaSide}{10}
\newcommand{\ipeFigCurves}{11}
\newcommand{\ipeFigProofTh}{15}
\newtheorem{theorem}{Theorem}
\newtheorem{lemma}[theorem]{Lemma}
\newtheorem{remark}[theorem]{Remark}
\DeclareMathOperator{\Probability}{\mathbb{P}}
\DeclareMathOperator{\Areaa}{Area}
\DeclareMathOperator{\Expected}{\mathbb{E}}
\newcommand{\Ex}[1]{\Expected\pbrcx{#1}}
\newcommand{\Prob}[1]{\Probability\pbrcx{#1}}
\newcommand{\Area}[1]{\Areaa\pth{#1}}
\newcommand{\R}{\ensuremath{\mathbb{R}}\xspace}
\newcommand{\N}{\ensuremath{\mathbb{N}}\xspace}
\newcommand{\dd}{{\rm d}}
\newcommand{\indicator}[1]{\mathbb{1}_{[#1]}}
\newcommand{\pth}[1]{\ensuremath{\left(#1\right)}}
\newcommand{\absv}[1]{\ensuremath{\left|#1\right|}}
\newcommand{\pbrcx}[1]{\ensuremath{\left[#1\right]}}
\def\TDdel{{\textit{TD}-Delaunay triangulation}\xspace}
\newcommand{\etal}{\emph{et al.}\xspace}
\newcommand{\spr}{\ensuremath{\delta}}
\newcommand{\rr}{\ensuremath{\rho}}
\newcommand{\halftsix}{half-\ensuremath{\Theta_6}-graph\xspace}
\newcommand{\tsix}{\ensuremath{\Theta_6}-graph\xspace}
\newcommand{\power}{\mathcal{P}}
\definecolor{OliveGreen}{rgb}{0.2,0.5,0.2}
\definecolor{Grey}{rgb}{0.7,0.7,0.7}
\definecolor{cyan}{rgb}{0.0, 0.72, 0.92}
\definecolor{Pink}{rgb}{1,0,1}
\newcommand{\mycomment}[1]{\marginpar{\tiny  #1}}
\newcommand{\od}[1]{{\color{OliveGreen}  #1}}
\newcommand{\olivier}[1]{\mycomment{\od{#1}}}
\newcommand{\sodareview}[1]{\mycomment{{\color{red}SODA review: #1}}}
\renewcommand{\od}[1]{#1}\renewcommand{\olivier}[1]{}\renewcommand{\sodareview}[1]{}
\newcommand*\patchAmsMathEnvironmentForLineno[1]{%
  \expandafter\let\csname old#1\expandafter\endcsname\csname #1\endcsname
  \expandafter\let\csname oldend#1\expandafter\endcsname\csname end#1\endcsname
  \renewenvironment{#1}%
     {\linenomath\csname old#1\endcsname}%
     {\csname oldend#1\endcsname\endlinenomath}}%
\newcommand*\patchBothAmsMathEnvironmentsForLineno[1]{%
  \patchAmsMathEnvironmentForLineno{#1}%
  \patchAmsMathEnvironmentForLineno{#1*}}%
\newcommand\footnoteref[1]{\protected@xdef\@thefnmark{\ref{#1}}\@footnotemark}
\author{
Prosenjit Bose,\thanks{
Carleton University, Ottawa, Canada. 
{\tt jit@scs.carleton.ca}}
\and
Jean-Lou De Carufel,\thanks{University of Ottawa,  Ottawa, Canada. {\tt jdecaruf@uottawa.ca }}
\and
Olivier Devillers,\thanks{
Universit\'e de Lorraine, CNRS, Inria, LORIA, F-54000 Nancy,
France. {\tt olivier.devillers@inria.fr}.}
}
\title{{
Expected Complexity of Routing in $\Theta_6$  and Half-$\Theta_6$ Graphs\thanks{%
\scriptsize
This work has been supported by INRIA Associated team TRIP,
by grant ANR-17-CE40-0017 of the French National Research Agency (ANR project ASPAG) and
by NSERC.
}}}
\date{}
\begin{document}

\maketitle

\begin{abstract}
We study online routing algorithms on the \tsix and the \halftsix (which is equivalent to 
a variant of the Delaunay triangulation). Given a source vertex $s$ and a target vertex $t$ in the \tsix (resp. \halftsix), there
exists a deterministic online routing algorithm that finds a path from
$s$ to $t$ whose length is at most $2\|st\|$ (resp. $2.89\|st\|$)
which is optimal in the worst case [Bose~\etal, {\sc siam} J. on Computing, 44(6)]. We propose alternative, slightly simpler routing algorithms that are optimal in the worst case and for which we provide an analysis of the average routing ratio for the \tsix and \halftsix defined on a Poisson point process. 

For the \tsix,  our online routing algorithm has an expected routing
ratio of $1.161$ (when $s$ and $t$ are random) and a maximum expected routing
ratio of $1.22$
(maximum for fixed $s$ and $t$ where all other points are random),
much
better than the worst-case routing ratio of $2$. 
For the \halftsix, our memoryless online routing algorithm has an
expected routing ratio of $1.43$ and a maximum expected routing ratio
of $1.58$. Our online routing algorithm that uses a constant amount of additional  memory
has an expected routing ratio of $1.34$ and a maximum expected routing
ratio of $1.40$. The additional memory is only used to remember the
coordinates of the   starting point of the route.
Both of these algorithms have an expected routing
ratio that is much better than their worst-case routing ratio of $2.89$. 
  
\end{abstract}


\section{Introduction}

A weighted geometric graph $G=(P,E)$ is a graph whose vertex set is a set $P$
of $n$ points in the plane, and whose edge set is a set of line segments
joining pairs of points in $P$, with each edge weighted by the Euclidean
distance between its endpoints.
A graph $G$ is
a geometric $\spr$-spanner of the complete geometric graph
provided that for every pair of points $(s,t)\in P^2$
the shortest path from $s$ to $t$ in $G$ has weight at most
$\spr\geq 1$ times $\|st\|$, where $\spr$ is the {\em spanning ratio} or {\em
  stretch factor} and $\|st\|$ is the Euclidean distance from $s$ to
$t$.
The spanning properties of various
geometric graphs have been studied extensively in the literature (see
\cite{BS11,NS-GSN-06} for a comprehensive overview of the topic).  

A routing algorithm for a geometric graph takes as input a pair of vertices
$(s,t)$ and finds a path from $s$ to $t$ in the graph.  When full knowledge of
the graph is available to the algorithm, numerous routing algorithms exist in
the literature for finding paths in these graphs such as Breadth-First Search,
Depth-First Search or Dijkstra's algorithm~\cite{m59,l61,ht73,
dijkstra1959note}.  The problem offers different challenges in the {\em
online} setting. By the online setting, we mean that
initially, the routing algorithm has limited knowledge of the graph and needs
to simultaneously explore the graph while trying to find a path from $s$ to
$t$. Without knowledge of the whole graph, a routing algorithm, in general,
cannot identify a short path. In certain cases, depending on the information
available to the routing algorithm and limitations placed on the algorithm such
as how much memory it has, it may not even find a path but may end up cycling
without ever reaching its destination~\cite{BoseM04}. A formal definition 
of our routing model is given in Section \ref{sec:model}.

A graph $G$ being a spanner of the complete graph simply implies the
existence of a short path in $G$ between every pair of vertices. The
goal of a competitive online routing algorithm is to find a short path
when one exists. A routing algorithm is \emph{$\rr$-competitive} if
for any two points $s$ and $t$, the length of the path in $G$ followed
by the routing algorithm is not more than $\rr$ times $\|st\|$, where
$\rr$ refers to the \emph{routing ratio}~\cite{Theta6-routing}. There is an
intimate connection between the spanning and routing ratio. The
routing ratio can be viewed as the spanning ratio of the path found by
the routing algorithm, thus the routing ratio is an upper bound on the
spanning ratio. 

Both the spanning ratio and the routing ratio are fragile
measures. For example, $G$ can be the complete graph that is missing
only one edge but can still have an unbounded spanning and routing
ratio. As such, studying the spanning and routing ratio in the
expected sense is a more robust measure. One of the key difficulties
in analyzing these ratios in the probabilistic sense is that often
there is a lot of dependence in the process used by a routing
algorithm to select which edge to follow. To overcome these barriers,
we need to define simple routing strategies with good worst-case
behaviour that can also be analyzed in the expected sense.

\subsection{Contribution}
This paper has two main contributions.
The first contribution consists of the design of two 
new algorithms for routing in the \halftsix
  (also known as the $TD$-Delaunay  triangulation~\cite{BGHI10})  
 in the so called {\em negative-routing} case, which is challenging
 since at each step, the routing algorithm must select among many
 possible edges to follow, some of which lead you astray. Our new routing algorithms come in 
 two flavors: one is memoryless and the other uses a constant amount of memory.
These new negative-routing algorithms have a worst-case optimal routing ratio
but are simpler and more amenable to probabilistic analysis than the known optimal routing algorithm~\cite{Theta6-routing}.  
We also provide a new point of view on routing~\cite{Theta6-routing} in the \halftsix in the
{\em positive-routing} case, which in some sense is identical to the optimal routing algorithm on the \tsix. This new point of view
allows us to complete the probabilistic analysis both on the \halftsix and on the \tsix.

The second contribution is the analysis of the two new negative-routing algorithms
and of the positive-routing algorithm in a random setting, namely when the vertex set of the \tsix and \halftsix is a point
set that comes from an infinite Poisson point process $X$ of intensity $\lambda$. 
The analysis is asymptotic with $\lambda$ going to infinity,
and gives the expected length of the shortest path between two fixed points $s$ and
$t$ at distance one. Our results depend on the position of $t$ with
respect to $s$. We express our results both by taking the worst
position for $t$ and by averaging over all possible positions for $t$.

The routing ratio for our memoryless negative-routing algorithm in the
\halftsix is
2.89 in the worst case which is optimal, 1.58 in the expected case for
the worst position for $t$, and 1.43 in the expected 
case when averaging over all possible positions for $t$.
For our constant memory negative-routing algorithm, we obtain a
routing ratio of
2.89 in the worst case, 1.40 expected for the worst position for $t$,
and 1.34 averaging on all possible positions for $t$. 
For the routing ratio of the positive-routing algorithm on the \halftsix, we obtain
2 in the worst case which is optimal, 1.22 expected for the worst
position of $t$, and 1.16 averaging on all possible positions for
$t$. 
Our results on routing in the \tsix are identical to the
positive-routing strategy since the \tsix is the union of two   
half-$\Theta_6$-graphs and one can locally differentiate the edges 
between the two spanning subgraphs.
Therefore, this algorithm for $\Theta_6$-routing is memoryless.

 Formal definitions of 
 the online routing model and the different graphs on which we route 
 are outlined in the following subsections. 

\subsection{The Poisson Point Process} \label{sec:poisson}

A Poisson point process $X$ of intensity $\lambda$ in the plane is an 
infinite set of points satisfying the following
properties:
the expected number of points of $X$ in a domain $A$ is $\lambda\cdot \Area{A}$
and the number of points of $X$ in two disjoint domains are
independent.
The number of points in $A$ follows a Poisson's law:
\[
   \Prob{|X\cap A|=k} = \frac{\lambda^k\Area{A}^k}{k!} e^{-\lambda\Area{A}}
  \]

\subsection{The Online Routing Model} \label{sec:model}

In its weakest form, an online local geometric routing algorithm
on a graph $G=(P,E)$
can be
expressed as a {\em routing function}
$f:P\times P \times\power(P) \to P$, where $\power(\cdot)$ denotes the power set, with
parameters $f(u,t,N(u))$ such that $u \in P$ is the vertex for which a
forwarding decision is being made (i.e., the node currently holding the
message), $t \in P$ is the destination vertex (target), and $N(u) \subseteq P$ is
the set of neighbours of $u$ in $G$.  Upon receiving a message
destined for $t$, node $u$ forwards the message to its neighbour
$z =f(u,t,N(u))\in N(u)$. This routing strategy is referred to as {\em memoryless} routing.
If the routing algorithm uses constant additional
memory
{to store some information $i\in\mathcal{I}$, then this information is}
taken into consideration when computing
which neighbour to forward the message to. The function then becomes
$f:P\times P \times \power(P) \times {\mathcal{I}} \to P$. Such a routing
strategy is referred to as {\em constant-memory routing.} In the remainder of
the article, we use the additional memory to store 
the vertex coordinates of the source of the message.

\subsection{$\Theta_k$-routing\label{s:thetakrouting}}
For any $k$ the $\Theta_k$-graph is defined as follows. For each point
{$p\in P$}, 
consider a set of rays originating from~$p$ with the angle between 
consecutive rays being $2\pi / k$. Each consecutive pair of rays 
defines a cone. Orient the cones such that there is one cone, labeled 
$C^p_0$, whose bisector is a vertical ray through $p$ pointing 
upwards. Label the cones in counterclockwise 
order: $C^p_0, \ldots, 
C^p_{k-1}$.
 Given two vertices $p$ and $q$ define the 
\emph{canonical triangle} $T_{pq}$ to be the triangle bounded by the
sides of the 
cone of $p$ that contains $q$ and the line through $q$ perpendicular 
to the bisector of that cone. An edge in $\Theta_k$ exists between two 
vertices $p$ and $q$ if $q$ is in some cone $C^p_i$, and for all 
points $w \in C^p_i$, $\|pq'\| \leq \|pw'\|$, where $p'$ and $w'$ denote 
the orthogonal projection of $p$ and $w$ onto the bisector of cone 
$C^p_i$.
In other words, $T_{pq}$ {contains no points of the point set $P$.
  We say that $T_{pq}$ is {\em empty}.}
The half-$\Theta_k$-graph is 
defined similarly for even $k$ but only half the cones are considered 
for edge inclusion.  Thus, an edge exists between two vertices $p$ and 
$q$ of the half-$\Theta_k$-graph provided that $q$ is in some cone 
$C^p_i$ where $i$ is even, and $T_{pq}$ is empty.
The {\em even cones} refer to the cones with even index 
and the {\em odd cones} refer to the ones with odd index. In fact, the 
$\Theta_k$-graph is the union of the half-$\Theta_k$-graph defined by 
the even cones and the half-$\Theta_k$-graph defined by the odd cones.

The structure of the $\Theta_k$-graph naturally gives rise to a simple 
routing algorithm known as \emph{$\Theta_k$-routing}. Let $t$ be the 
destination vertex. The $\Theta_k$-routing algorithm invoked at an 
%
%
arbitrary vertex $v$ consists of following the edge adjacent to $v$ in 
the cone of $v$ that contains $t$. This process is repeated until the 
destination $t$ is reached. 

It is known that $\Theta_k$-routing 
terminates with routing ratio $\rr = 1+f(k)$ where $f(k) \in o(1)$ for 
all $\Theta_k$-graphs with $k\geq 7$
\cite{ruppert1991approximating, BCMRV16}.
There is a gap between the best known upper bound
on the spanning ratio and $\rr$  (see \cite{BCMRV16} for a survey of 
the best known bounds both on the spanning ratio and $\rr$).
\begin{figure}[t]
     \begin{center}
       \includegraphics[page=\ipeFigSmallTheta,width=0.8\textwidth]{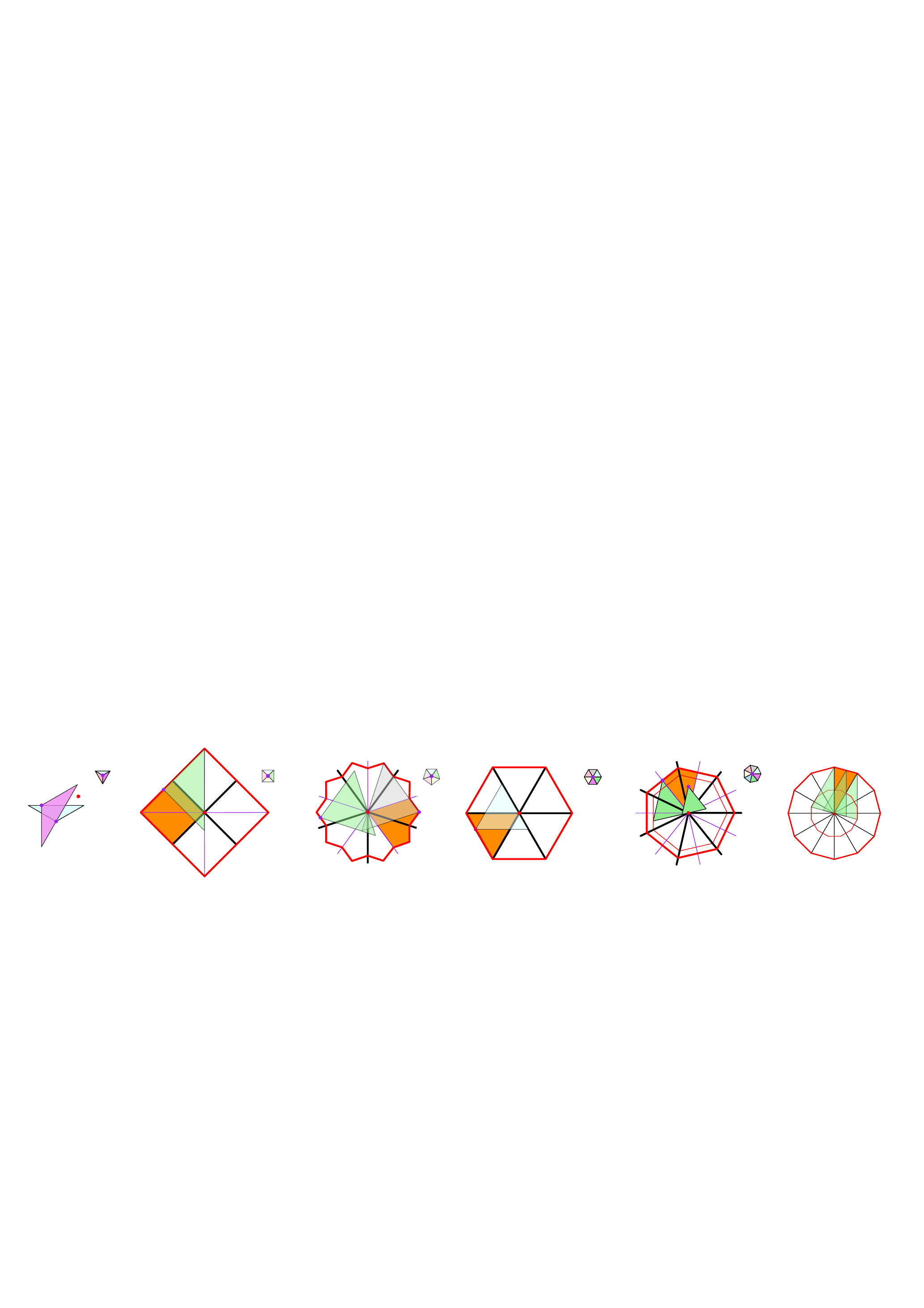}
     \end{center}
     \caption{
       {Routing in the $\Theta_k$-graph for $k\in\{3,4,5,6,7,8\}$
           (from left to right). 
        $\Theta_3$-routing to the red point loop between the two 
       purple points. $\Theta_k$-routing ($k\in\{4,5,6\}$) 
       from the red polygon to the red point goes strictly inside the 
       polygon, thus decrease the polygonal distance to the target and 
       prove that routing will terminates on a finite set of 
       points. 
       For $k\in\{7,8,\ldots\}$, one step from the boundary of the red 
       polygon allows to quantify the decrease in polygonal distance 
       to the target and prove that the routing ratio is bounded. 
     }
        \label{fig:small-theta}
     }
 \end{figure}
For $k=2$, the graph is just the $y$-monotone chain of vertices 
ordered vertically. In this case, $\Theta_2$-routing works but the routing ratio is unbounded. 
For $k=3$, the graph is connected but $\Theta_3$-routing may loop 
as on the example of Figure~\ref{fig:small-theta}-left~\cite{ABBBKRTV14}. 
For $k \in \{4,5,6\}$, 
$\Theta_k$-routing always finds a path but its length may be unbounded 
(see Figures~\ref{fig:small-theta} and~\ref{fig:loop}). 
Alternative routing algorithms dedicated specifically for $\Theta_4$, $\Theta_5$, and $\Theta_6$ graphs 
have been designed proving that they have constant routing ratio respectively 
smaller than 17~\cite{Theta4}, $\sqrt{50+22\sqrt{5}}<10$~\cite{Theta5}, 
and 2~\cite{BGHI10,Theta6-routing}. 
\begin{figure}[t]
     \begin{center}
         \includegraphics[page=\ipeFigLoop,width=0.8\textwidth]{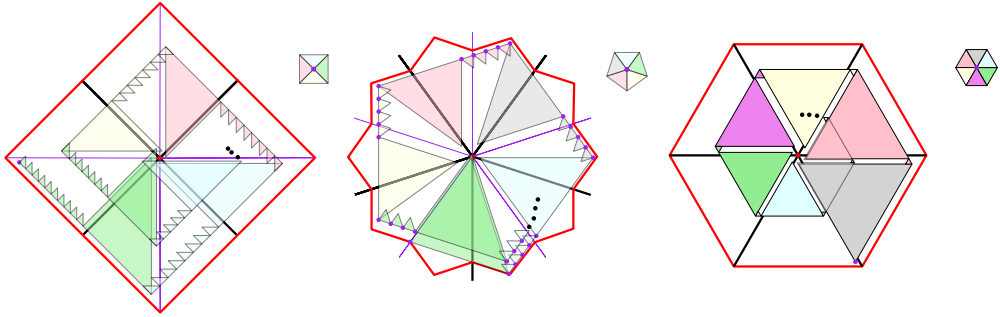}
     \end{center}
     \caption{
Routing ratio in the $\Theta_k$-graph is unbounded for $k\in\{4,5,6\}$
        \label{fig:loop}
     }
 \end{figure}
 Bonichon and Marckert
\cite{bonichon_marckert_2011} 
analyze the expected length of the $\Theta_k$-routing algorithm when 
the $\Theta_k$-graph is defined on a Poisson point process 
with intensity $\lambda$ tending to infinity.
Their results are much more complete since they address all $k$ and
variants of $\Theta_k$-graphs such as Yao-graphs or continuous
$\Theta_k$-graphs.
{But these results are limited to the standard $\Theta_k$-routing algorithm
  while we will address different routing algorithms in the sequel.}

\subsection{The \tsix and Half-\tsix}
In the special case $k=6$, the \tsix has some interesting properties.
The six rays that define the cones around a point $p$ make an angle of 
$0$, $\frac\pi3$, $\frac{2\pi}3$, $\pi$, $\frac{4\pi}3$, and $\frac{5\pi}3$ 
with the horizontal axis. 
The triangles $T_{pq}$ when  $q$ is in an even (resp. odd) cone $C_i^p$ are all
homothets. This is the key property that is not shared with any other
$\Theta$-graphs.
For other values of $k$ cones are 
homothets only when the value of $i$ is fixed. For $k=6$ only the
parity of $i$ matters.
Using this property,
Bonichon~\etal~\cite{BGHI10} noted that the \halftsix is equivalent 
to the \TDdel.
The \TDdel is a variant of the standard Delaunay triangulation 
where the empty disk property is replaced with an empty homothet of an equilateral triangle 
(\textit{TD} is an abbreviation for triangular distance).

\subsection{$\Theta_6$-routing\label{s:theta6routing}}

  As mentioned in Section~\ref{s:thetakrouting}, $\Theta_6$-routing
  always terminates but may have an unbounded routing-ratio
(see Figures~\ref{fig:small-theta} and~\ref{fig:loop}). 
The analysis by Bonichon and Marckert 
\cite{bonichon_marckert_2011}
on $\Theta_k$-routing for $k=6$ bounds the expected
cost of the $\Theta_6$-routing algorithm 
when the point set is defined on a Poisson point process 
with intensity $\lambda$ tending to infinity. 
As noted in Remark~\ref{r:theta6routing2}, our techniques establish the same 
probabilistic bounds. 
However the worst case routing ration of the $\Theta_6$-routing
algorithm is unbounded, see Figure~\ref{fig:loop}.
We focus on the probabilistic analysis of 
routing algorithms that are optimal in the worst-case.

Chew~\cite{Chew89} showed that the \TDdel (equivalently the 
\halftsix) is a 2-spanner. His proof is constructive, however, it does 
not provide an online routing algorithm that successfully routes 
between every ordered pair of vertices. Without loss of generality,
label the cones of the \halftsix such that the 
\TDdel is equivalent to the even \halftsix. In this case,
positive-routing refers to routing from $s$ to $t$ when $T_{st}$ is
even and negative-routing refers to routing from $s$ to $t$ when
$T_{st}$ is odd.  
Chew's algorithm is a positive-routing algorithm and constructs a path from $s$ to $t$ with a routing ratio of 2 when $T_{st}$ is even. Since for every pair of points $s$ and $t$, either $T_{st}$ is even or $T_{ts}$ is even, Chew's algorithm proves 
that the \TDdel is a geometric 2-spanner. However, when $T_{st}$ is odd, Chew's routing algorithm fails. 

Bose~\etal~\cite{Theta6-routing} addressed the negative-routing case  
providing an algorithm with routing ratio 
$\frac{5}{\sqrt{3}}\simeq 2.89$. 
Surprisingly, this ratio is optimal for any constant-memory online routing algorithm~\cite{Theta6-routing}. 
This algorithm and the  one for the positive case are detailed in Sections~\ref{s:positive} and~\ref{s:negative}. 
Since the worst-case optimal routing
ratio is $2.89$ and worst-case optimal spanning ratio is $2$, this  is
one of the rare known separations between the spanning ratio and  
routing ratio of a spanner in the online
setting~\cite{Theta6-routing}.
In essence, even though there exists a path from $s$ to $t$
whose length is at most $2\|st\|$, no constant memory routing
algorithm can find this path in the worst-case.

 \section{Two Basic Routing Building Blocks on the Half-\tsix \label{s:basic}}

   We introduce two routing modes on the \halftsix which serve as
   building blocks for our routing algorithms that have optimal
   worst-case behaviour. We consider the even \halftsix 
   and for ease of reference, we color code the cones $C_0, C_2$ and $C_4$
    blue, red, and green 
   respectively. The two building blocks are: the {\em forward-routing phase} and the {\em side-routing phase}. 
{
Then these two building blocks are used to reformulate the routing algorithms
proposed by Bose at al.~\cite{Theta6-routing}.
}

\subsection{Forward-Routing Phase}

Forward-routing consists of only following edges defined by a specific type of cone (i.e., a cone with the same color) until some specified stopping condition is met. 
For example, suppose the specific cone selected for forward routing is the blue cone. Thus, when forward-routing
is invoked at a vertex $x$, the edge followed is $xy$ where $y$ is the
vertex
\begin{figure}[t]     \begin{center}
         \includegraphics[page=\ipeFigForward,width=\textwidth]{./Figures}
     \end{center}     \caption{
       A \TDdel (\halftsix).
        \label{fig:forward}     } \end{figure}
adjacent to $x$ in $x$'s blue cone. If  
the stopping  
 condition is not met at $y$, then the next edge followed is $yz$ where $z$ is the vertex adjacent to $y$ in $y$'s blue cone. This process continues until a specified stopping condition is met.  
A path produced by forward routing consists of edges of the same color since edges are selected from one specific cone
as  illustrated in Figure~\ref{fig:forward}.

\begin{lemma}\label{f:forward-wc-length}
Suppose that forward-routing is invoked at a vertex $s$ and ends at a vertex $t$.
The length of the path from $s$ to $t$
produced by forward-routing is at most the length of one side of the canonical triangle $T_{st}$ which is
$\frac{2}{\sqrt{3}}$ times 
the length of the orthogonal projection of $st$ onto the bisector of $C_0^s$.
\end{lemma}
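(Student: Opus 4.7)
The plan is to give a short telescoping argument along the direction of the bisector of the chosen forward cone. Without loss of generality I would assume that forward-routing is invoked using the blue cone $C_0$, whose bisector is the upward vertical ray and whose half-angle is $\pi/6$. Let the path produced be $s=v_0,v_1,\ldots,v_k=t$. The defining property of forward-routing is that $v_{i+1}\in C_0^{v_i}$ for each $i$, and since all blue cones are translates of one another, the direction of every edge $v_iv_{i+1}$ makes an angle of at most $\pi/6$ with the vertical; in particular each such edge has a strictly positive vertical component, and by convexity of $C_0$ every $v_i$ lies in $C_0^s$, so $T_{st}$ is well-defined.

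The second step is to bound each edge length by its vertical advance. Writing $\Delta_i>0$ for the vertical component of the vector $v_iv_{i+1}$ and $\theta_i$ for the angle the edge makes with the vertical, we have $|v_iv_{i+1}|=\Delta_i/\cos\theta_i\le \tfrac{2}{\sqrt{3}}\,\Delta_i$ since $|\theta_i|\le \pi/6$ and $\cos(\pi/6)=\sqrt{3}/2$. Summing over the $k$ edges, the vertical components $\Delta_i$ telescope and their sum equals $d$, the length of the projection of $st$ onto the bisector of $C_0^s$. Hence the total path length is at most $\tfrac{2d}{\sqrt{3}}$.

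To conclude, I would identify $\tfrac{2d}{\sqrt{3}}$ with a side length of $T_{st}$. Because the cone angle is $\pi/3$ and the third side of $T_{st}$ is perpendicular to the bisector at vertical distance $d$ from $s$, the triangle is equilateral, with slanted sides of length $d/\cos(\pi/6)=2d/\sqrt{3}$ and base of length $2d\tan(\pi/6)=2d/\sqrt{3}$. There is no real obstacle in this argument; termination at $t$ is assumed in the statement, so the only care needed is a tie-breaking convention ensuring that $v_{i+1}$ lies in the open cone $C_0^{v_i}$, so that each edge contributes a strictly positive vertical increment to the telescoping sum and the per-edge inequality applies.
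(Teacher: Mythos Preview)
Your proof is correct and follows essentially the same approach as the paper's: the paper's one-sentence argument is that each edge makes an angle of at most $\pi/6$ with the cone bisector and the path is monotone in the bisector direction, which is exactly your telescoping bound $\|v_iv_{i+1}\|\le \Delta_i/\cos(\pi/6)$ summed over the edges. You have simply spelled out the details, including the explicit identification of $2d/\sqrt{3}$ with the side length of the equilateral canonical triangle, which the paper leaves implicit.
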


  \begin{proof}
This result follows from the fact that each edge along the path makes a maximum angle of $\frac\pi6$ with the
cone bisector and the path is monotone in the direction of the cone bisector.
  \end{proof}

\subsection{Side-Routing Phase in the Half-\tsix}

The {\em side-routing phase} is defined on the \halftsix by using the
fact that it is 
the \TDdel, and thus planar.
Consider a line $\ell$ {parallel to one of the cone sides.
Without loss of generality, we will assume $\ell$ is horizontal.
We call the side of the line that bounds the even cones the {\em positive
  side} of $\ell$. For a horizontal line, the positive side is below
$\ell$, and for the lines with slopes $-\sqrt3$ and $\sqrt3$,
respectively, the positive side is above the line.}
Let
$\Delta_1,\ \Delta_2,\ \Delta_3,\ldots$ be an ordered sequence
of consecutive triangles of the \TDdel
intersecting $\ell$.
Let $j\in \N^\star$ and let $B$ be the  
{piece of the} boundary of the union of the triangles $\Delta_1,\ldots, \Delta_j$ that
{goes from $s$, the bottom-left vertex of $\Delta_1$,
  to $t$, the bottom-right vertex of $\Delta_j$, below $\ell$.}
Note that $B$ is a path 
in the \halftsix. 
Side-routing invoked at vertex $s$ along $\ell$ stopping at $t$
consists of walking from $s$ to $t$
along $B$ (see Figure~\ref{fig:side} for an example).

\begin{lemma}\label{f:side-wc-length}
  Side-routing on the positive side of a line $\ell$ {parallel to a
    cone boundary} invoked at a
  vertex $s$  and stopped at a vertex $t$
  in the \halftsix results in a path whose length is bounded 
  by twice the length of the orthogonal projection of $st$ on $\ell$.
{This path only uses edges of two colors and all vertices of the path
  have their successor of the third color on the other side of $\ell$.}
\end{lemma}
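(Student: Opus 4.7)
The plan is to prove the three claims in order: the two-color property, the third-color-on-the-other-side property, and the length bound. Without loss of generality I take $\ell$ horizontal and the positive side below, so that the even cones $C_2$ (red) and $C_4$ (green) point into the positive side while $C_0$ (blue) points to the other side.

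The heart of the argument is the following geometric lemma that I would establish first: \emph{if a vertex $x$ below $\ell$ and its blue neighbor $b_x\in C_0^x$ both lie below $\ell$, then $x$ has no neighbor above $\ell$ in the \halftsix}. To see this, I would rule out all ways a hypothetical neighbor $w$ above $\ell$ could yield an edge $xw$: via $x$'s cone $C_0^x$ (blocked because $b_x$ projects strictly closer to $x$ on the cone bisector than $w$, since the bisector is vertical and $b_x^y<\ell^y<w^y$), and via $w$'s cones $C_2^w$ or $C_4^w$. For the latter two, the key is that $b_x-x$ has angle $\theta\in[\pi/3,2\pi/3]$, and then the identities $\sin\theta+\sqrt 3\cos\theta=2\sin(\theta+\pi/3)\geq 0$ and $\sin\theta-\sqrt 3\cos\theta=2\sin(\theta-\pi/3)\geq 0$ combined with $x$ lying on the slope-$(-\sqrt 3)$ oblique side of $T_{wx}$ force $b_x$ strictly onto the interior side of both oblique sides of $T_{wx}$; together with $b_x^y<w^y$ this shows $b_x$ is strictly inside $T_{wx}$, so $T_{wx}$ is not empty and $xw$ is not an edge via $C_2^w$. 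The $C_4^w$ case is symmetric.

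Given this lemma, claim (c) follows immediately: any vertex $v$ on $B$ belongs to some $\Delta_i$ crossing $\ell$, so $v$ has a neighbor above $\ell$, forcing $b_v$ above $\ell$. Claim (b) is also immediate from the contrapositive: a bottom edge on $B$ belongs to a Type-I triangle $(u,v,w)$ with $w$ above $\ell$; if this edge $uv$ were blue, then $v=b_u$ would be $u$'s blue neighbor below $\ell$, and the lemma applied to $x=u$ would rule out the edge $uw$, contradicting $(u,v,w)$ being a face of the triangulation.

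For the length bound (a), red and green edges make angles at most $\pi/3$ with $\ell$, yielding $\|e\|\leq 2|e|_\ell$ for each edge $e$ on $B$, where $|e|_\ell$ is the length of the projection of $e$ on $\ell$. The path $B$ is $\ell$-monotone because the triangles $\Delta_1,\ldots,\Delta_j$ are consecutive along $\ell$, and within each interior Type-I triangle the below-vertex shared with the previous triangle has smaller $\ell$-projection than the one shared with the next. Summing the edgewise projections then gives exactly the projection length of $st$ on $\ell$, so the total length of $B$ is at most twice this projection. The main obstacle I anticipate is the geometric lemma, specifically showing that $b_x$ lies strictly inside $T_{wx}$; this reduces to the trigonometric identities above and avoids heavy computation.
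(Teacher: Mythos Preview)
Your auxiliary lemma is correct and gives a clean alternative route to claims (b) and (c): once you know that a below-$\ell$ vertex with its blue successor also below~$\ell$ can have no neighbour above~$\ell$, both the colour restriction and the ``blue successor above~$\ell$'' property follow by contraposition, just as you describe. (One small imprecision: for the far oblique side of $T_{wx}$, which passes through $w$ rather than $x$, your trig inequality only shows that $b_x-x$ points weakly to the interior side; you need to add that $x$ itself is strictly on the interior side of that line, so the sum is strictly positive. This is easy but should be said.)

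The real gap is in your length bound. You assert that $B$ is $\ell$-monotone, but this does not follow from the triangles being consecutive along~$\ell$, nor from your (b) and (c). Knowing that every path edge is red or green only constrains its direction to lie outside the blue sectors; such an edge can still have negative horizontal component (for instance a red edge $uv$ with $v\in C_2^u$ points down-left). In an arbitrary triangulation the lower boundary of a strip of $\ell$-crossing triangles need not be $x$-monotone, so something specific to the half-$\Theta_6$ is required here, and your tools do not supply it. Without monotonicity the edgewise projections onto~$\ell$ can overlap, and the factor-$2$ bound collapses.

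The paper closes exactly this gap with a single structural observation: for each face $\Delta_i$ contributing an edge to $B$, the circumscribing empty triangle $\nabla_i$ has its horizontal side above~$\ell$ (otherwise $\Delta_i\subset\nabla_i$ would lie entirely below~$\ell$), so the two below-$\ell$ vertices of $\Delta_i$ sit on the \emph{left} and \emph{right} oblique sides of~$\nabla_i$. That immediately forces the path edge to go left-to-right, giving monotonicity and the length bound at once; the same observation also yields (b) and (c) directly, since the slopes of the oblique sides determine the edge colour and the portion of $C_0^u$ below~$\ell$ is trapped inside the empty~$\nabla_i$. So the paper's $\nabla_i$ argument does in one stroke what your auxiliary lemma does \emph{plus} the missing monotonicity step. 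Your lemma is a nice self-contained statement, but to complete your proof you would still need the left-side/right-side placement in~$\nabla_i$---at which point the paper's shorter route is already in hand.
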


\begin{proof}
Without loss of generality, assume $\ell$ is horizontal and the positive side is below $\ell$.
  Consider the triangles $\Delta_i$, $1\leq i\leq j$ as defined above.
    The empty equilateral triangle $\nabla_i$ circumbscribing $\Delta_i$
    has a vertex of $\Delta_i$ on each of its sides by construction
    (the $\nabla_i$  are shown in grey in Figure~\ref{fig:side}).
    If $\Delta_i$ has an edge of the path (i.e., below $\ell$)
    then the vertex on the horizontal side of $\nabla_1$ is above the line while the two
    others are below.
Thus, such an edge of the path goes from the left to the right
  side of $\nabla_i$. Based on the slopes of the edges of $\nabla_i$, we have the following:
 \\ --a-- Each edge on the path has a length smaller than twice its   horizontal projection.
 Therefore, summing the lengths of all the projections of the edges gives 
  the claimed bound on the length.
\\  --b-- If the slope is negative, the path edge is green and if the slope is positive, the path edge is red.
\\ --c-- The blue successor of a vertex $u$ of $\Delta_i$ on the lower
  sides of $\nabla_i$ is above $\ell$ since the part of $C_o^u$ below
  $\ell$ is inside $\nabla_i$ and thus contains no other points.
  Therefore, blue edges do not appear on the path. 
\end{proof}
\begin{figure}[t]
     \begin{center}
         \includegraphics[page=\ipeFigSide,width=0.8\textwidth]{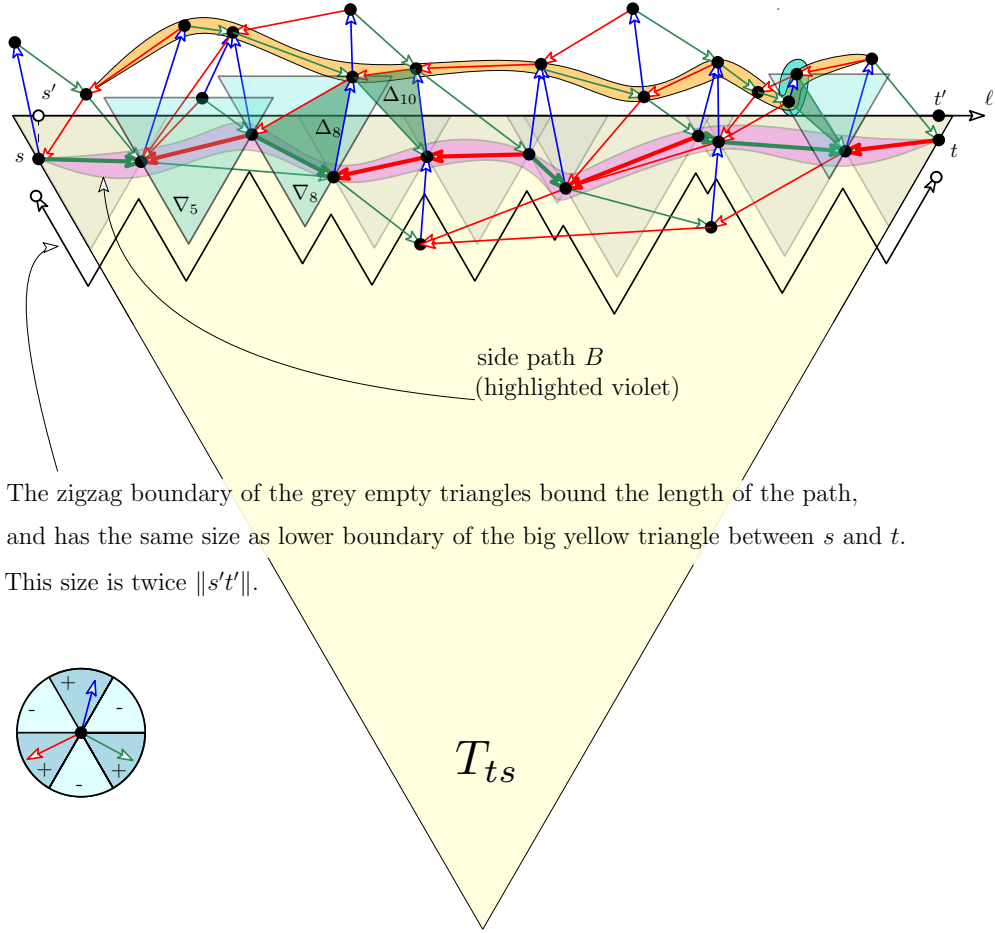}
     \end{center}
     \caption{
A side path below the horizontal line $\ell$. 
        \label{fig:side}
     }
 \end{figure}

{Notice that on the negative side of the line, we do not have the same
properties.} For example, the path above 
$\ell$ in Figure~\ref{fig:side}  uses at least one edge of each color 
(the path above is highlighted in orange, 
note there 
is one blue edge circled in blue). 

\subsection{Positive in the Half-\tsix (and the \tsix)\label{s:positive}}

If $t$ is in a positive cone of $s$,
Bose et al.~\cite{Theta6-routing}
proposed a routing algorithm in the \halftsix which they called \emph{positive routing}.
This algorithm consists of two phases: a \emph{forward-routing} phase and a \emph{side-routing} phase.
The forward-routing phase is invoked with source $s$ and destination
$t$. It produces a path from $s$ to the first vertex $u$ outside the
negative cone of $t$ that contains $s$.
The side-routing phase, invoked with source $u$ and destination $t$, finds a path along the boundary of this negative cone.

For completeness,
we give a proof of the following lemma shown in~\cite{Theta6-routing}.
\begin{lemma}
  \label{l:positive-wc-length}
Positive routing has a worst-case routing ratio of  $2$. 
\end{lemma}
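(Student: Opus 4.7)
My plan is to normalise coordinates so that $t=(0,0)$ and the bisector of $C_0^s$ is vertical; then $s=(s_x,-h)$ with $h>0$ and $|s_x|\leq h/\sqrt{3}$ (the last inequality encoding $s\in C_3^t$). Reflection symmetry across this bisector lets me assume $s_x\geq 0$ and that the forward phase exits $C_3^t$ through the right cone side $\ell_R:y=-\sqrt{3}x$; the case where $u$ exits through the left side is symmetric.

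The proof assembles three ingredients. First, Lemma~\ref{f:forward-wc-length} bounds the forward-routing length by $\tfrac{2}{\sqrt{3}}(u_y+h)$, since $u_y-s_y=u_y+h$ is the projection of $su$ on the bisector. Second, Lemma~\ref{f:side-wc-length} bounds the side-routing length by $2|\alpha_u|$, where $\alpha_u=(u_x-\sqrt{3}\,u_y)/2$ is the signed projection of $u$ on $\ell_R$ measured from $t$. Third, because every blue edge makes angle in $[\pi/3,2\pi/3]$ with the horizontal, its horizontal component is at most $1/\sqrt{3}$ times its vertical one; telescoping along the forward path yields $u_x\leq s_x+(u_y+h)/\sqrt{3}$.

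The geometric crux, and the step I expect to require the most care, is the claim $u_y\leq 0$. Let $v$ be the vertex preceding $u$ on the forward path (or $v=s$ if $u$ is the first step). Since $v\in C_3^t$, the elementary equivalence $v\in C_3^t\Leftrightarrow t\in v+C_0^v$ places $t$ inside the blue cone of $v$. By the \halftsix definition, the blue neighbour $u$ of $v$ is the vertex of $v+C_0^v$ whose projection on the vertical bisector is smallest; since $t$ lies in that cone at height $0$, we obtain $u_y\leq 0$. Combined with the assumption that $u$ is above $\ell_R$, i.e.\ $u_y>-\sqrt{3}\,u_x$, this forces $u_x>0$ and hence $\alpha_u>0$, so the side bound simplifies to $u_x-\sqrt{3}\,u_y$.

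Putting the pieces together, the total routed length satisfies
\[
L \;\leq\; \frac{2(u_y+h)}{\sqrt{3}}+u_x-\sqrt{3}\,u_y \;\leq\; \sqrt{3}\,h+s_x,
\]
where the second step substitutes the blue-slope bound for $u_x$ and the $u_y$-terms cancel exactly. The desired $L\leq 2\|st\|=2\sqrt{s_x^2+h^2}$ then reduces, after squaring, to the identity $(\sqrt{3}\,s_x-h)^2\geq 0$, with equality precisely when $s_x=h/\sqrt{3}$, i.e.\ when $s$ coincides with the base corner of the canonical triangle $T_{ts}$. This pinpoints the unique worst-case configuration in which the routing ratio $2$ is attained.
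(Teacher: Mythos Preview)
Your proof is correct and follows essentially the same approach as the paper's: both apply Lemma~\ref{f:forward-wc-length} to the forward phase and Lemma~\ref{f:side-wc-length} to the side phase, then combine to reach the bound $\sqrt{3}h+s_x$ (equivalently $\|sw\|+\|wt\|$ in the paper's notation, with $w$ the top-right corner of $T_{st}$), which is at most $2\|st\|$. The presentation differs only cosmetically: the paper argues via projections $x,v,y,y',y''$ read off a figure and cancels a length $\beta$, whereas you work in explicit coordinates, use the telescoped cone-slope bound $u_x\le s_x+(u_y+h)/\sqrt{3}$, and make explicit the observation $u_y\le 0$ (which the paper's figure encodes implicitly and which is indeed needed to fix the sign in the side-routing bound). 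One minor wording issue: a single reflection lets you assume either $s_x\ge 0$ or exit-right, not both simultaneously; but since your derivation for the exit-right case nowhere uses $s_x\ge 0$, this does not affect the argument.
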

\begin{proof}
Without loss of generality, we assume $t$ is in the positive cone
{$C_0^s$} and  that the 
\begin{figure}[t]
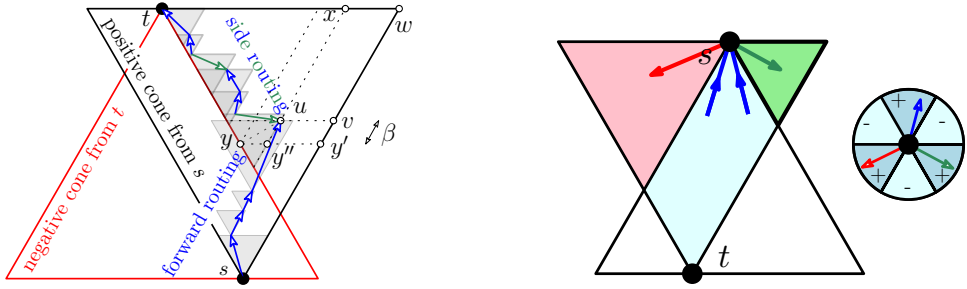
     \begin{center}
         \includegraphics[page=\ipeFigPositiveRouting,width=0.33\textwidth]{./Figures}\vspace*{2mm}
         \hspace*{0.1\textwidth}
         \includegraphics[page=\ipeFigNegativeRouting,width=0.33\textwidth]{./Figures}
     \end{center}
     \caption{
Positive and negative routing schemes~\cite{Theta6-routing}. 
        \label{fig:routing}
     }
 \end{figure}
forward routing leaves the negative cone from $t$ through its  right side.
Let $u$ be the last vertex on the forward routing path,
$x$, $v$ the projections of $u$ on $\partial T_{st}$
(the boundary of $T_{st}$)
parallel to
  its sides; $y$
 the perpendicular projection of $u$ on $\partial T_{ts}$
 ; and $y'$ and $y''$ horizontal projections of $y$ on
 $\partial T_{st}$ and line $xu$
 (see Figure~\ref{fig:routing}-left).
  By Lemma~\ref{f:forward-wc-length}, the length of the path from $s$ to $u$ is
bounded by  $\|sv\|=\|sy'\|+\|y'v\|$ and 
by Lemma~\ref{f:side-wc-length}, the length of the path from $u$ to $t$ is
bounded  by  $2\|yt\|=\|xy''\|+(\|tx\|-\|y''y\|)$. 
Since the triangle $yy''u$ is isosceles we have 
$\|yy''\|=\|y''u\|=\|y'v\|=:\beta$. 
Combining the two paths, the total length is bounded 
 by \[ \|sy'\|+\beta+\|y''x\|+\|tx\|-\beta
=\|sy'\|+\|y'w\|+\|xt\|
\leq\|sw\|+\|wt\|.\]
 Thus, the 
 stretch factor is smaller than $\frac{\|sw\|+\|wt\|}{\|st\|}$.
{Studying the function $\xi\leadsto \frac{1+\xi}{\sqrt{\frac34+\pth{\xi-\frac12}^2}}$
and its derivative, this stretch factor is}
maximal and equal to $2$ when {$\|wt\|=\|sw\|$ ($\xi=\frac{\|wt\|}{\|sw\|}=1$) 
$t$  in the upper left corner in  Figure~\ref{fig:routing}-left}. 
\end{proof}

\subsection{Negative Routing in the Half-\tsix \label{s:negative}}
 When $t$ is in a negative cone of $s$,
Bose et al.~\cite{Theta6-routing} propose a routing strategy which
they call {\em negative routing}. Without
loss of generality, assume that $t \in C^s_3$, which implies that
$s\in C^t_0$ (i.e., the blue cone of $t$ in Figure~\ref{fig:routing}-right). Notice that $T_{ts}$ is
partitioned by $T_{st}$
into three pieces, the portion contained in $C^s_2$ (red  cone of $s$), $C^s_3$ and $C^s_4$ (green cone of $s$).
We refer to these three zones as: the red triangle, the blue region (which is the intersection of the blue cone of $t$ with $T_{st}$) and the green triangle.
Without loss of generality, we assume that
$s$ is to the right of $t$ and the green triangle is smaller than the red one.

Bose et al.'s algorithm~\cite{Theta6-routing}
can be reformulated in the following way:
If neither the green nor the red triangle is empty,
negative routing follows the edge from $s$ into the smaller of the two triangles.
If one of the green or the red triangle is empty, negative routing uses one step of
side routing along the side of the empty triangle
(if both are empty, choose the larger of the two).

This process is iterated until $t$ is reached.
The worst-case stretch factor of this process is 
$\frac{5}{\sqrt{3}}\simeq 2.89$ \cite{Theta6-routing}.

\section{Alternative Negative Routing Algorithms in the Half-\tsix\label{s:alternative-neg}}
In this section, we outline {two} alternatives to the negative routing algorithm described by Bose et al.~\cite{Theta6-routing}. Our algorithms are a little simpler to describe, have the same worst-case routing ratio, 
and are easier to analyze in the random setting.
The lower bound of
  $\frac{5}{\sqrt{3}}\simeq 2.89$ \cite{Theta6-routing}
  applies to our alternative negative routing algorithms.

\subsection{Memoryless Routing\label{s:mem-less-negative}}

{
\begin{enumerate}
\renewcommand{\theenumi}{Case~\arabic{enumi}}
\item 
  \label{step:forward}
    If $t$ is in the positive cone $C_i^s$ ($i$ even),
    take one step of forward-routing   in the direction of $t$
\item 
    \label{step:side-cw} 
    If $t$ is in the negative cone $C_i^s$ ($i$ odd) and the successor $u$
    of $s$ in $C_{i-1}^s$ is outside $T_{ts}$ (red triangle empty),
    take one step of side-routing  along the side of $T_{ts}$ crossed 
    by $su$. 
\item 
    \label{step:side-ccw} 
    If $t$ is in the negative cone $C_i^s$ ($i$ odd) and the successor $u$
    of $s$ in $C_{i+1}^s$ is outside $T_{ts}$ (green triangle empty), 
    take one step of side-routing  along the side of $T_{ts}$ crossed 
    by $su$. 
\item 
    \label{step:neg-forward} 
    If $t$ is in the negative cone $C_i^s$ ($i$ odd) and both successors 
    of $s$ in $C_{i-1}^s$ and $C_{i+1}^s$ are inside $T_{ts}$ (green 
    and red triangle non empty),
    take one step of forward-routing   in the direction of the side of 
    $T_{ts}$ incident to $t$ closest to $s$ (go to the green successor of $s$). 
\end{enumerate}

Beyond the presentation, our strategy differs from the one of Bose et al.~\cite{Theta6-routing} in 
\ref{step:neg-forward} where Bose et al. follows a blue edge 
if one exists. 

We remark that when we reach \ref{step:side-ccw},
we enter a side-routing phase that will continue until $t$ is reached
since a side-routing step ensures that at the next iteration
side-routing is still applicable.

The same argument holds in \ref{step:side-cw},
unless we reach a point $s$ with both successors outside $T_{ts}$
in which case we follow the other side of $T_{ts}$.

To summarize, if $t$ is in a positive cone of $s$, this routing algorithm
will produce the path described in Section~\ref{s:positive}
and Lemma~\ref{l:positive-wc-length} applies.
If $t$ is in a negative cone of $s$} %
we use a forward phase in the green triangle, until we reach a vertex
$u$ whose edge in the green triangle 
intersects $T_{ts}$ (recall that we assume that the green triangle is the smaller one).
At this point, we invoke side-routing from $u$ to $t$
along the boundary of $T_{ts}$.
\begin{figure}[t]     \begin{center}
         \includegraphics[page=\ipeFigMemorylessNeg,width=0.33\textwidth]{./Figures}
         \hspace*{0.1\textwidth}
         \includegraphics[page=\ipeFigOneMemoryNeg,width=0.33\textwidth]{./Figures}
     \end{center}
     \caption{
For Lemmas~\ref{l:memless-wc} and~\ref{l:mem-one-wc}
        \label{fig:lemmas-wc}
     }
 \end{figure}

 \begin{lemma}\label{l:memless-wc}
Memoryless negative routing has a worst-case routing ratio of 
$\frac{5}{\sqrt{3}}\simeq 2.89$. 
\end{lemma}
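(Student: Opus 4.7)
The plan is to split the path produced by the memoryless negative routing algorithm into its two natural phases and apply the basic lemmas of Section~\ref{s:basic}, then maximize the resulting geometric bound over the configuration. Without loss of generality assume $t\in C_3^s$ and $s$ lies to the right of $t$, so that the green triangle $T_{ts}\cap C_4^s$ is the smaller of the two outer triangles of $T_{ts}$. Normalize $\|st\|=1$, place $t$ at the origin and $s=(\sin\theta,\cos\theta)$ for $\theta\in[0,\pi/6]$. The green triangle is then an equilateral triangle of side $a=\tfrac{\cos\theta}{\sqrt3}-\sin\theta$ whose top side runs from $s$ to the top-right vertex $R_t=(\tfrac{\cos\theta}{\sqrt3},\cos\theta)$. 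As described just above the lemma statement, the algorithm follows $C_4$-edges from $s$ up to some vertex $u$ whose green successor would leave $T_{ts}$ (the forward-routing phase), and then side-routes from $u$ to $t$ along the right side of $T_{ts}$.

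Because $C_4$ is a convex cone, the successive $C_4$-edges of the forward phase sum to a vector lying in $C_4^s$, hence $u$ lies in the closed green triangle. Lemma~\ref{f:forward-wc-length} therefore bounds the forward phase's length by $\tfrac{2}{\sqrt3}\,d_1$, where $d_1$ is the orthogonal projection of $\vec{su}$ onto the bisector of $C_4^s$ (direction $(\tfrac{\sqrt3}{2},-\tfrac12)$). The right side of $T_{ts}$ has slope $\sqrt3$ and is parallel to a cone boundary, and $u$ lies on its positive side since $u$ is inside $T_{ts}$. Lemma~\ref{f:side-wc-length} then bounds the side phase's length by $2\,d_2$, where $d_2$ is the orthogonal projection of $\vec{ut}$ onto that same line.

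Expanding $d_1$ and $d_2$ in the above coordinates, the combined upper bound $\tfrac{2}{\sqrt3}d_1+2d_2$ reduces to the linear function $2u_x+\tfrac{2}{\sqrt3}u_y+a$ of $u$, which over the closed green triangle is maximized at the vertex $R_t$ with value $\tfrac{5\cos\theta}{\sqrt3}-\sin\theta$. This expression is strictly decreasing on $[0,\pi/6]$ (derivative $-\tfrac{5}{\sqrt3}\sin\theta-\cos\theta<0$), attaining its maximum $\tfrac{5}{\sqrt3}$ at $\theta=0$, the symmetric configuration in which $s$ lies directly above $t$. The main conceptual step is noticing that the two lemmas' bounds combine into a single linear function of $u$ whose optimum lies at the far corner $R_t$ of the green triangle; once this is in place the optimization in $\theta$ is immediate, and the extremal configuration matches the picture on the left of Figure~\ref{fig:lemmas-wc}.
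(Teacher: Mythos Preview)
Your proof is correct and follows essentially the same route as the paper's: bound the forward phase by Lemma~\ref{f:forward-wc-length}, the side phase by Lemma~\ref{f:side-wc-length}, and then optimize geometrically, with the worst case occurring when $u$ sits at the upper-right corner $w=R_t$ of $T_{ts}$ and $s$ lies vertically above $t$. The only presentational difference is that you make the step $\|sx\|+2\|vt\|\le\|sw\|+2\|wt\|$ explicit by observing that the combined bound is a linear function of $u$ over the green triangle and hence maximized at the vertex $R_t$; the paper states this inequality directly and then optimizes over the ratio $\xi=\|sw\|/\|wt\|$, which is equivalent to your optimization over $\theta$.
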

\begin{proof}
  Assume without loss of generality $s\in C_0^t$. 
  Referring to Figure~\ref{fig:lemmas-wc}-left,
  let $w$ be the upper right vertex of $T_{ts}$,
  $v$  be the orthogonal projection of $u$ on $tw$ and
$x$ its projection parallel to $tw$ on $sw$.
By Lemma~\ref{f:forward-wc-length}, the path from $s$ to $u$ has length 
bounded by  $\|sx\|$ and 
by Lemma~\ref{f:side-wc-length}, the path from $u$ to $t$ has length 
bounded by  $2\|vt\|$.
Combining the two paths the length is bounded by 
$\|sx\|+2\|vt\|\leq\|sw\|+2\|wt\|$. 
Thus the 
stretch is smaller than $\frac{\|sw\|+2\|wt\|}{\|st\|}$.
{Studying the function $\xi\leadsto \frac{2+\xi}{\sqrt{\frac34+\pth{\xi-\frac12}^2}}$
this stretch factor}
attains its 
maximum value of $\frac5{\sqrt3}$ when $s$ and $t$ lie on a  vertical
line {($\xi=\frac{\|wt\|}{\|sw\|}=\frac12$).}
\end{proof}

\subsection{Constant-Memory Negative Routing\label{s:mem-one-negative}}

We propose a second negative routing algorithm that has the same
worst-case routing ratio, but we will prove that it has a better
average routing ratio. However, it is 
no longer memoryless since it needs to remember the coordinates of one
vertex, namely the source $s$ of the path. 

Let $x''$ be the intersection between $T_{ts}$ and $T_{st}$ closest to $s$. 
(see Figure~\ref{fig:lemmas-wc}-right). 
{The idea is to use} side-routing from $s$ along $sx''$
and, just before exiting the green triangle, apply side-routing
along $x''t$.

This routing algorithm is identical to the one in the previous subsection, except that we replace \ref{step:neg-forward} with the
 following, where $u$ is the current vertex and $s$ is the origin of the
 path whose coordinares are kept in memory:
\begin{enumerate}
\item[\ref{step:neg-forward}']
    If $t$ is in the negative cone $C_i^u$ ($i$ odd) and both successors 
    of $u$ in $C_{i-1}^u$ and $C_{i+1}^u$ are inside $T_{tu}$ (green 
    and red triangle non empty): 
    take one step of side-routing  along the line $sx''$.  
\end{enumerate}

\begin{lemma}\label{l:mem-one-wc}
Constant-memory negative routing has a worst-case routing ratio of  
$\frac{5}{\sqrt{3}}\simeq 2.89$. 
\end{lemma}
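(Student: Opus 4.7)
The plan is to mimic the structure of the proof of Lemma~\ref{l:memless-wc}. Assume without loss of generality $s\in C_0^t$, $s$ lies to the right of $t$, and the green triangle is smaller than the red one. Let $w$ be the upper-right vertex of $T_{ts}$ and let $u$ be the vertex at which the algorithm transitions from Phase~1 (side-routing along line $sx''$) to Phase~2 (side-routing along line $x''t$). Since $x''$ lies on the right side of $T_{ts}$, the line $x''t$ coincides with line $tw$.

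Both phases are side-routing, so I apply Lemma~\ref{f:side-wc-length} to each. Let $y'$ and $v$ denote the orthogonal projections of $u$ onto line $sx''$ and line $tw$ respectively. Then Phase~1 has length at most $2\|sy'\|$ and Phase~2 has length at most $2\|vt\|$. The core geometric step is to prove
\[
2\|sy'\| + 2\|vt\| \;\le\; \|sw\| + 2\|wt\|.
\]
Placing $t$ at the origin and $s=(x_s,h)$, a direct expansion using the unit directions $(1/2,-\sqrt{3}/2)$ along $sx''$ and $(1/2,\sqrt{3}/2)$ along $tw$ collapses the left-hand side to $2u_x-x_s+\sqrt{3}\,h$. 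Since $u$ lies in the closed green triangle with vertices $s$, $w$, $x''$, one has $u_x\le h/\sqrt{3}$, and the maximum $5h/\sqrt{3}-x_s=\|sw\|+2\|wt\|$ is attained as $u\to w$.

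With this inequality in hand, the total path length is at most $\|sw\|+2\|wt\|$, so the stretch-factor calculation becomes identical to the one at the end of the proof of Lemma~\ref{l:memless-wc}: the maximum value $\tfrac{5}{\sqrt{3}}$ is reached when $s$ lies on the vertical through $t$ (i.e.\ $\xi=\tfrac12$), which also meets the lower bound of $\tfrac{5}{\sqrt{3}}$ recalled at the beginning of Section~\ref{s:alternative-neg}.

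The main obstacle I expect is justifying that the transition vertex $u$ really stays in the closed green triangle. This needs a careful tracking of the case analysis of Section~\ref{s:mem-one-negative}: as long as case~\ref{step:neg-forward}' applies, both the red and the green sub-triangle of $T_{tu}$ contain a successor, and the green sub-triangle shrinks monotonically as the current vertex advances along $sx''$; it empties exactly when the side-routing path is about to cross the side $wx''$ of the green triangle, triggering case~\ref{step:side-ccw} and the switch to side-routing along line $tw$. One must also rule out configurations where the path could leave the green triangle through its top edge $sw$ before this transition.
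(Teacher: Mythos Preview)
Your approach is essentially the same as the paper's: both apply Lemma~\ref{f:side-wc-length} to each side-routing phase and reduce to the inequality $2\|sy'\|+2\|vt\|\le\|sw\|+2\|wt\|$, after which the stretch computation from Lemma~\ref{l:memless-wc} finishes the job. The only difference is cosmetic—the paper proves this inequality synthetically, observing that your $y'$ (which it calls $x$) is the foot of the perpendicular from $u$ onto a side of the small equilateral triangle $x'x''v'$, whence $\|x'x\|\le\tfrac12\|x'v'\|\le\tfrac12\|sw\|$, instead of your coordinate expansion—and, like you, the paper does not spell out the argument that $u$ stays in the green triangle.
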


\begin{proof}
  Assume without loss of generality $s\in C_0^t$. 
  Referring to the Figure~\ref{fig:lemmas-wc}-right,
  let $x'$ and $x$ be the horizontal and orthogonal projections of 
    $u$ on $T_{st}$, respectively, and $v'$ and $v$ be the horizontal and orthogonal projections of 
    $u$ on $T_{ts}$, respectively.
By Lemma~\ref{f:side-wc-length}, the path from $s$ to $u$ has length 
bounded by  $2\|sx\|$ and 
by Lemma~\ref{f:side-wc-length} again, the path from $u$ to $t$ has length 
bounded by  $2\|vt\|$.
Combining the two paths the length is bounded by 
$2\|sx\|+2\|vt\|
\leq  2\|sx'\|+2\|x'x\|+2\|v't\|
= 2\|wt\| + 2\|xx'\|
$.
Since $x$ is the orthogonal projection of $u$ on the side $x'x''$ of
the equilateral triangle $x'x''v'$, $\|xx'\|$ is smaller than the half side of the
triangle $x'x''v'$ and we get a bound on the path length of
$2\|wt\| + 2\|xx'\|
\leq 2\|wt\| +2\frac 12 \|x'v'\|
\leq 2\|wt\| + \|sw\|$.
Therefore the result follows.
\end{proof}

\section{Probabilistic Analysis}
\label{s:prob}

In this section,
we develop tools to analyze the expected routing ratio of the routing algorithms defined in Sections~\ref{s:basic} and~\ref{s:alternative-neg}
from a probabilistic point of view.
In Section~\ref{ss:forward},
we analyze the expected routing ratio of a forward-routing phase.
In Section~\ref{ss:side},
we analyze the expected routing ratio of a side-routing phase.
Then,
using these results,
we will analyze in Section~\ref{s:wrap} the expected routing ratio of
four different routing algorithms.

\subsection{Routing Ratio of a Forward-Routing Phase}
\label{ss:forward}

Let $X$ be a Poisson point process with intensity $\lambda$
and consider the \halftsix defined on $X\cup\{s\}$,
where $s$ is the origin.
Let $p_0 = s$ and $p_{i+1}$ be the successor 
of $p_i$ in the \halftsix using the cone
$C_0^{p_i}$.
We define  $\tau_1:= \frac{\sqrt3}{12}(3\ln3+4)$.

\newcounter{postponedtheorem}\setcounter{postponedtheorem}{\value{theorem}}
\begin{lemma}\label{l:forward}
Let $A>0$ and  $\alpha=2\sqrt A\lambda^{-\frac14}\sqrt{\log (2A\sqrt\lambda)}$. 
  Consider a forward-routing phase in the upward direction, starting at the origin until 
  it crosses the line $y=A$. 
  The expected routing ratio of this phase is 
  $\tau_1 +O\pth{\frac\alpha A}$. 
  With probability greater than 
  $1- \frac{17}{5} A^{-\frac12}\lambda^{-\frac14}$, 
  the endpoint of this phase lies in
  $[-\alpha, \alpha]\times[A+2\alpha]$. 
\end{lemma}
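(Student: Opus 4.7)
The plan is to reduce the analysis to a one-dimensional renewal process by exploiting Poisson independence, compute the per-step expected vertical gain and edge length, and then control the overshoot and the horizontal drift by Poisson tail bounds.

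First, I would analyze a single step. Let $(X,Y) := p_{i+1}-p_i$. By Slivnyak's theorem, conditioning on $p_i$ leaves the rest of the process Poisson of intensity $\lambda$. The canonical triangle attached to any $(x,y)$ in the cone $C_0^{p_i}$ is equilateral with area $y^2/\sqrt{3}$, so the Palm density of $(X,Y)$ on $\{(x,y)\colon y>0,\ |x|\leq y/\sqrt{3}\}$ is $\lambda\exp(-\lambda y^2/\sqrt{3})$. A direct integration yields $\Ex{Y} = \frac{3^{1/4}\sqrt{\pi}}{2\sqrt{\lambda}}$ and $\Ex{Y^2} = \frac{\sqrt{3}}{\lambda}$. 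After a polar change of variable, the expected edge length satisfies $\Ex{L} = \frac{\sqrt{3}}{2}\pth{\int_{-\pi/6}^{\pi/6}\sec^{3}\phi\,d\phi}\Ex{Y}$, where $L=\|p_{i+1}-p_i\|$. The standard antiderivative of $\sec^{3}$ gives $\int_{-\pi/6}^{\pi/6}\sec^{3}\phi\,d\phi = \frac{2}{3}+\frac{\ln 3}{2}$, whence $\Ex{L}/\Ex{Y} = \frac{\sqrt{3}}{12}(4+3\ln 3) = \tau_1$; this is the origin of the constant $\tau_1$.

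Second, I would note that two successive canonical triangles $T_{p_i p_{i+1}}$ and $T_{p_{i+1} p_{i+2}}$ lie in disjoint horizontal strips (their shared boundary line has Lebesgue measure zero). Combining the Poisson independence of disjoint regions with iterated Slivnyak, the steps $(X_i,Y_i)$ are therefore i.i.d., placing us in a standard renewal framework. Let $N$ be the first index such that $\sum_{i\leq N} Y_i \geq A$. Then Wald's identity gives $\Ex{\sum_{i\leq N} L_i} = \Ex{N}\,\Ex{L} = \tau_1\cdot\Ex{\sum_{i\leq N} Y_i}$. The overshoot $\sum_{i\leq N}Y_i - A$ is at most the size-biased mean $\Ex{Y^2}/\Ex{Y} = O\pth{\lambda^{-1/2}}$, so dividing the expected path length by $A$ yields $\tau_1 + O\pth{\lambda^{-1/2}/A}$, which is comfortably absorbed into the stated $\tau_1+O\pth{\alpha/A}$.

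Third, for the endpoint location, I would verify two sub-claims: (i) no step satisfies $Y_i > 2\alpha$, hence $\sum_{i\leq N}Y_i \leq A+2\alpha$; and (ii) $\absv{\sum_{i\leq N} X_i}\leq \alpha$. For (i), a union bound over the $O(A\sqrt{\lambda})$ steps together with the Gaussian-like tail $\Prob{Y>2\alpha}\leq \exp(-4\lambda\alpha^2/\sqrt{3})$ and the choice $\alpha = 2\sqrt{A}\lambda^{-1/4}\sqrt{\log(2A\sqrt{\lambda})}$ makes the failure probability polynomially small in $A\sqrt{\lambda}$. For (ii), each $X_i$ has mean zero by the left-right symmetry of the cone and variance $\frac{1}{3^{3/2}\lambda}$, so $\sum X_i$ has variance $O\pth{A/\sqrt{\lambda}}$; the logarithmic factor built into $\alpha$ is exactly what a Chernoff/Bernstein bound needs to reach the desired tail.

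The main obstacle will be keeping the constants sharp enough to produce the precise numerical bound $\frac{17}{5}A^{-1/2}\lambda^{-1/4}$ on the failure probability. This requires careful bookkeeping across all the tail inequalities above, paying particular attention to the fact that the last step $Y_N$ is size-biased (so its mean is $\Ex{Y^2}/\Ex{Y}$, not $\Ex{Y}$), and to the exact sub-Gaussian constant in the Bernstein bound for $\sum X_i$ coming from the precise density of the $X_i$'s.
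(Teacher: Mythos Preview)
Your approach is correct but genuinely different from the paper's. The paper does not use Wald's identity or renewal-theoretic overshoot bounds; instead it fixes a deterministic step count $n_A=(A+\alpha)\sqrt{\lambda}/\mu_y$, computes the first three moments of $L$, $L_x$, $L_y$ explicitly in polar coordinates, and then applies the Berry--Esseen inequality to $\sum_{i\le n_A}L_{y,i}$ and $\sum_{i\le n_A}L_{x,i}$ separately. The $A^{-1/2}\lambda^{-1/4}$ rate, and in particular the constant $17/5=3.4$, come directly from the $\rho_\star/(2\sigma_\star^3\sqrt{n_A})$ term of Berry--Esseen together with the $1-\mathrm{erf}$ tail; the routing-ratio bound is obtained by splitting on the event $\{i_A\le n_A\}$ and invoking the worst-case bound $2/\sqrt{3}$ on the complement. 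Your Wald argument is cleaner for the expected ratio (and your overshoot bound via a union bound on $\{Y_i>2\alpha\}$ is in fact super-polynomially small, much stronger than what Berry--Esseen gives), but it will not produce the specific constant $17/5$: that numerical value is an artefact of the Berry--Esseen route and the exact values of $\rho_x/\sigma_x^3$, $\rho_y/\sigma_y^3$.

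One technical point to tighten: your Chernoff/Bernstein step for $\sum_{i\le N}X_i$ is stated for a random stopping time $N$, which is not the standard hypothesis. You should either first show $N\le n_A$ with the required probability (which your overshoot argument essentially gives) and then apply Bernstein at the fixed level $n_A$ via a maximal-inequality variant, or follow the paper and work with the deterministic $n_A$ throughout; otherwise the concentration claim for the horizontal drift is not fully justified.
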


\begin{proof}
  Let $p_0=s,p_1,p_2,\ldots,p_n$ denote the vertices of the
    forward path.
We consider the following random variables: 
$L_i$ is the Euclidean length of $p_{i-1}p_i$, 
$L_{x,i}$ is the signed length of the horizontal projection of $p_{i-1}p_i$, and 
$L_{y,i}$ is the length of the vertical projection of $p_{i-1}p_i$. 

Since $C_0^{p_i}$ does not intersect $T_{p_{i-1}p_i}$, for
different values of $i$, these variables are independent and identically distributed. 
Thus, 
when there is no ambiguity we denote them by $L$, $L_x$, and $L_y$, 
respectively. 

If $p$ is the upward successor of $s$ in the \halftsix, 
then the upward triangle $\mathcal{T}$ from $s$ having $p$ on its upper boundary is empty. 
Using polar coordinates where $p=r(\cos \phi,\sin\phi)$, 
the area of $\mathcal{T}$, denoted by $\Area{\mathcal{T}}$, is $\frac1{\sqrt3}r^2\sin^2\phi$. 
Thus $\Prob{\mbox{$\mathcal{T}$ is empty}}=e^{-\lambda\Area{\mathcal{T}}}=e^{-\lambda\frac{1}{\sqrt{3}}r^2\sin^2\phi}$, 
from which\footnotemark \footnotetext{Integral computations are
  available as a Maple worksheet with this paper on HAL repository.}
(see Figure~\ref{fig:lemmas-proba-forward}).

\begin{figure}[t]     \begin{center}
         \includegraphics[page=\ipeFigLemmaForward,width=0.33\textwidth]{./Figures}
     \end{center}
     \caption{
For Lemma~\ref{l:forward}. 
        \label{fig:lemmas-proba-forward}
     }
 \end{figure}

\begin{eqnarray*}
\Ex{L}
=\Ex{L_1}
     & = & \lambda \int_{p\in Blue Cone} \Prob{p=p_1} \|p\| \dd p 
\\ & = & \lambda \int_0^\infty \int_{\frac{\pi}{3}}^{\frac{2\pi}{3}}
         e^{-\lambda\frac{1}{\sqrt{3}}r^2\sin^2\phi }r^2 \dd\phi\dd r 
\\ & = & \tfrac{1}{\sqrt{\lambda}} \tfrac{1}{2}3^{-\frac{1}{4}}\sqrt {\pi}(1+\tfrac{3}{4}\ln   3) 
      \simeq \tfrac{1.228}{\sqrt{\lambda}}. 
\end{eqnarray*}

We define $\mu:=\tfrac{1}{2}3^{-\frac{1}{4}}\sqrt{\pi}(1+\tfrac{3}{4}\ln   3) \simeq 1.228$.
By symmetry, 
the expected length of the horizontal projection of an edge is 
$\Ex{L_x}=0$. For the vertical projection, we get 

\begin{eqnarray*}
\Ex{L_y}
  & = &
           \!\lambda \!\!\int_{p\in BlueCone}
           \hspace*{-15mm}
           \Prob{p=p_1} y_p \dd p 
 \quad = \quad 
           \!\lambda \!\!\int_0^\infty \!\!\!\! 
              \int_{\frac{\pi}{3}}^{\frac{2\pi}{3}}  \!\!\!\! 
         e^{-\lambda\frac{1}{\sqrt{3}}r^2\sin^2\!\!\phi } r^2 \sin \phi\  \dd\phi\dd r 
  \\ & = &
           \tfrac{1}{\sqrt{\lambda}} \tfrac{1}{2}3^{\frac{1}{4}}\sqrt {\pi}
      \simeq \tfrac{1.166}{\sqrt{\lambda}} . 
\end{eqnarray*}

We define $\mu_y:=\tfrac{1}{2}3^{\frac{1}{4}}\sqrt {\pi}\simeq 1.166$.
We prove that after $n$ steps, the length of the path from $p_0$ to 
$p_n$ is approximately $n\Ex{L}=\frac{n\mu}{\sqrt{\lambda}}$, while the position of $p_n$
is close to $(n\Ex{L_x}, n\Ex{L_y})=(0,\frac{n\mu_y}{\sqrt{\lambda}})$. 
This gives a routing ratio around 
$\frac{\mu}{\mu_y}$. 
Formally, let us first compute the higher order moments 
and define as follows the constants 
  $\sigma$, 
  $\sigma_x$, 
  $\sigma_y$, 
  $\rho$, 
  $\rho_x$, and 
  $\rho_y$. 
\begin{eqnarray*}
  \Ex{L^2}
  & = &
   \lambda \int_0^\infty \int_{\frac{\pi}{3}}^{\frac{2\pi}{3}}
       e^{-\lambda\frac{1}{\sqrt{3}}r^2\sin^2\phi }r^3 \dd\phi\dd r 
        = 
        \tfrac{10\sqrt{3}}{9\lambda}      = \tfrac{\sigma^2}{{\lambda}}, 
 \\
        \Ex{L_x^2}
   & = &
        \lambda \int_0^\infty \int_{\frac{\pi}{3}}^{\frac{2\pi}{3}}
        e^{-\lambda\frac{1}{\sqrt{3}}r^2\sin^2\phi }r^3 \cos^2\phi\ \dd\phi\dd r 
        = 
        \tfrac{\sqrt{3}}{9\lambda}= \tfrac{\sigma^2_x}{{\lambda}}, 
 \\
        \Ex{L_y^2}
  & = &
        \lambda \int_0^\infty \int_{\frac{\pi}{3}}^{\frac{2\pi}{3}}
        e^{-\lambda\frac{1}{\sqrt{3}}r^2\sin^2\phi }r^3 \sin^2\phi\ \dd\phi\dd r 
        = 
        \tfrac{\sqrt{3}}{\lambda}= \tfrac{\sigma^2_y}{{\lambda}}, 
\\
        \Ex{L^3}
  & = &
        \lambda \int_0^\infty \int_{\frac{\pi}{3}}^{\frac{2\pi}{3}}
        e^{-\lambda\frac{1}{\sqrt{3}}r^2\sin^2\phi }r^4 \dd\phi\dd r 
        = 
        \tfrac{(27\ln 3+68)\sqrt{\pi\sqrt{3}}}{64\lambda\sqrt{\lambda}}
                                       = \tfrac{\rho}{\lambda\sqrt{\lambda}}, 
 \\
        \Ex{|L_x|^3}
  & = &
        2\lambda \int_0^\infty \int_{\frac{\pi}{3}}^{\frac{\pi}{2}}
        e^{-\lambda\frac{1}{\sqrt{3}}r^2\sin^2\phi }r^4 \cos^3\phi\ \dd\phi\dd r 
        = 
        \tfrac{3^{\frac{1}{4}}\sqrt{\pi}}{16\lambda\sqrt{\lambda}}
                                       = \tfrac{\rho_x}{\lambda\sqrt{\lambda}}, 
 \\
        \Ex{|L_y|^3}
  & = &
        \lambda \int_0^\infty \int_{\frac{\pi}{3}}^{\frac{2\pi}{3}}
        e^{-\lambda\frac{1}{\sqrt{3}}r^2\sin^2\phi }r^4 \sin^3\phi\ \dd\phi\dd r 
        = 
        \tfrac{3^{\frac{7}{4}}\sqrt{\pi}}{4\lambda\sqrt{\lambda}}
                                       = \tfrac{\rho_y}{\lambda\sqrt{\lambda}}. 
\end{eqnarray*}

For identically independently distributed variables $X_i$, 
each of which has expected value $\mu_\star$, standard deviation $\sigma_\star$, 
and third moment $\rho_\star$, 
the \emph{central limit theorem}
states that 
\[
\Prob{\absv{\sum_{i=1}^n X_i - n\mu_\star} \geq a\sigma_\star\sqrt{n}}
\leadsto \pth{1- {\rm erf}\pth{ \tfrac{a}{\sqrt{2}} }}, 
\]
when $n$ tends to infinity, 
where \emph{erf} is the error function 
$erf(x):=\frac1{\sqrt\pi}\int_{-x}^x e^{-t^2} dt\in[-1,1]$.
Then, 
the \emph{Berry-Esseen inequality} specifies the rate of convergence: 
\begin{equation}\label{e:Berry-Esseen}
\absv{\rule{0mm}{8mm}
\Prob{\absv{\sum_{i=1}^n X_i - n\mu_\star} \geq a\sigma_\star\sqrt{n}}
- \pth{1-{\rm erf}\pth{ \tfrac{a}{\sqrt{2}} }}
} \leq \frac{\rho_\star}{2\sigma_\star^3\sqrt{n}} . 
\end{equation}

Applying Equation~\eqref{e:Berry-Esseen} to $L_y$ 
with  $a=\sqrt{\log n}$ and using ${\rm erf}(t)\geq 1-\frac{e^{-t^2}}{\sqrt{\pi}t}$ for $t\geq 0$, 
we get 
\begin{eqnarray*}
\Prob{ \absv{ \sum_{i=1}^n L_{y,i} -
    n\tfrac{\mu_y}{\sqrt{\lambda}}} \geq\tfrac{\sigma_y}{\sqrt{\lambda}} \sqrt{n\log n}}
&\leq&  \pth{1-{\rm erf}\pth{ \sqrt{\tfrac{\log n}2} }}
+ \frac{\frac{\rho_y}{\lambda\sqrt{\lambda}}}{2 \pth{\frac{\sigma_y}{\sqrt{\lambda}}}^3\sqrt{n}}
\\&=& \pth{1-{\rm erf}\pth{ \sqrt{\tfrac{\log n}2} }}
+ \frac{\rho_y}{2 \sigma_y^3\sqrt{n}}. 
\\ & \leq&
\pth{1- \pth{1-\frac{\frac1{\sqrt n}}{\sqrt\pi  \sqrt{\frac{\log n}2}}   }}
+ \frac{\rho_y}{2 \sigma_y^3\sqrt{n}}
  \\ & =&
\frac{\sqrt 2}{\sqrt{\pi}\sqrt{n\log n}}
+ \frac{\rho_y}{2 \sigma_y^3\sqrt{n}} . 
\end{eqnarray*}

Let $n_A=\frac{(A+\alpha)\sqrt{\lambda}}{\mu_y}$. 
For $\lambda$ big enough we have 
$\alpha=2\sqrt A\lambda^{-\frac14}\sqrt{\log (2A\sqrt\lambda)}$
smaller than $A$, from which 
\begin{eqnarray*}
  \tfrac{\sigma_y}{\sqrt{\lambda}} \sqrt{n_A\log n_A}
 &=&
  \tfrac{\sigma_y}{\sqrt{\lambda}} \sqrt{\frac{(A+\alpha)\sqrt{\lambda}}{\mu_y}\log \frac{(A+\alpha)\sqrt{\lambda}}{\mu_y}}
\\ &\leq&
  \tfrac{\sigma_y}{\sqrt{\lambda}} \sqrt{\frac{2A\sqrt{\lambda}}{\mu_y}\log \frac{2A\sqrt{\lambda}}{\mu_y}}
  \\ &=&
\sigma_y\sqrt{\tfrac2{\mu_y}}\sqrt A\lambda^{-\frac14}\sqrt{\log (\tfrac2{\mu_y}A\sqrt\lambda)}
\\ &\leq&
3^{\frac14}\sqrt{\tfrac2{\frac123^{\frac14}\sqrt\pi}}\sqrt A\lambda^{-\frac14}\sqrt{\log (2A\sqrt\lambda)}
\\ &\leq&
1.73\sqrt A\lambda^{-\frac14}\sqrt{\log (2A\sqrt\lambda)}
\\ &\leq&
2\sqrt A\lambda^{-\frac14}\sqrt{\log (2A\sqrt\lambda)}
=          \alpha . 
\end{eqnarray*}

\begin{eqnarray*}
  {\Prob{\absv{\sum_{i=1}^{n_A} L_{y,i} - A-\alpha}\geq\alpha }}
  & \leq&
\Prob{ \absv{ \sum_{i=1}^{n_A} L_{y,i} -    A-\alpha} \geq\tfrac{\sigma_y}{\sqrt{\lambda}} \sqrt{n_A\log n_A}}
\\ & \leq&
\frac{\sqrt 2}{\sqrt{\pi}\sqrt{{n_A}\log {n_A}}}
+ \frac{\rho_y}{2 \sigma_y^3\sqrt{{n_A}}}
\\ & \leq&
\tfrac1{\sqrt {n_A}}\pth{\sqrt{\tfrac2\pi}+\frac{\rho_y}{2\sigma_y^3}}
\\ & \leq&
\sqrt{\tfrac{\mu_y}{A\sqrt \lambda}}\pth{\sqrt{\tfrac2\pi}+\frac{\rho_y}{2\sigma_y^3}}. 
\end{eqnarray*}

Now we look at the $x$-coordinate of $p_{n_A}$, 
which is $\sum_{i=1}^{n_A} L_{x,i}$. 
Using Equation~\eqref{e:Berry-Esseen} again, 
we have 

\[
\Prob{\absv{\sum_{i=1}^{n_A} L_{x,i}} \geq \tfrac{\sigma_x}{\sqrt{\lambda}} \sqrt{{n_A}\log {n_A}}}
\leq  \pth{1-{\rm erf}\pth{ \sqrt{\tfrac{\log {n_A}}2} }}
+ \frac{\rho_x}{2 \sigma_x^3\sqrt{{n_A}}}. 
\]

Substituting for ${n_A}$ and since 
$\alpha\geq\tfrac{\sigma_x}{\sqrt{\lambda}} \sqrt{{n_A}\log {n_A}}$
(for $\lambda$ big enough), 
we get 

\[
\Prob{\absv{\sum_{i=1}^{n_A} L_{x,i}} \geq \alpha}
\leq 
\sqrt{\tfrac{\mu_y}{A\sqrt \lambda}}\pth{\sqrt{\tfrac2\pi}+\frac{\rho_x}{2\sigma_x^3}}. 
\]

Thus, 

\begin{eqnarray*}
  \Prob{p_{n_A}\in[-\alpha,\alpha]\times[A,A+2\alpha]}
&  \leq& 
\sqrt{\tfrac{\mu_y}{A\sqrt\lambda}}\pth{2\sqrt{\tfrac2\pi}+\frac{\rho_x}{2\sigma_x^3} +\frac{\rho_y}{2\sigma_y^3}}
\\ &=&
       \sqrt{       \tfrac123^{\frac14}\sqrt\pi}
       \pth{2\sqrt{\tfrac2\pi}+\frac{\frac{3^{\frac14}\sqrt\pi}{16}}{2(3^{-\frac34})^3}
                           +\frac{\frac{3^{\frac74}\sqrt\pi}4}{2(3^{\frac14})^3}}
       A^{-\frac12}\lambda^{-\frac14}
\\ &\leq&
3.4 A^{-\frac12}\lambda^{-\frac14} . 
\end{eqnarray*}

Let $i_A$ be the smallest $i$ such that $y_{p_i}\geq A$. 
If $i_A\leq n_A$, 
then $y_{p_{n_A}}\geq y_{p_{i_A}}\geq A$.
We can bound $\|p_0p_{i_A}\|>A$ and the path length 
$\sum_{i=1}^{i_A} L_i\leq \sum_{i=1}^{n_A} L_i\leq n_A\Ex{L}$. 
If $i_A>n_A$, 
we can 
bound the routing ratio by 
$\frac{2}{\sqrt3}$
using  Lemma~\ref{f:forward-wc-length}
(worst-case analysis of forward-routing). 

\begin{eqnarray*}
  \Ex{\tfrac{\sum_{i=1}^{i_A} L_i}{\|p_0p_{i_A}\|}}
&\leq&
  \Prob{i_A\leq {n_A}} \Ex{\tfrac{\sum_{i=1}^{i_A} L_i}{\|p_0p_{i_A}\|}}
     + \Prob{i_A >  {n_A}} \frac{2}{\sqrt3}
  \\ & \leq &
              1\cdot \frac{{n_A} \cdot \Ex{L}  }A 
              + \sqrt{\tfrac{\mu_y}{A\sqrt \lambda}}\pth{\sqrt{\tfrac2\pi}+\frac{\rho_y}{2\sigma_y^3}} \frac{2}{\sqrt3}
  \\ & \leq &
              \frac{ \frac{\pth{A+\alpha}\sqrt\lambda}{\mu_y} \frac{\mu}{\sqrt\lambda}}A 
              + \sqrt{\tfrac{4\mu_y}3}\pth{ \sqrt{\tfrac2\pi}+\frac{\rho_y}{2\sigma_y^3} }\frac1{A\sqrt\lambda}
  \\ & = &
\frac{\mu}{\mu_y} 
              + O\pth{ A^{-\frac12}\lambda^{-\frac14}\sqrt{\log  (2A\sqrt\lambda)} }
              + O\pth{ \frac1{A\sqrt\lambda} }
  \\ & = &
 \frac{\sqrt3}{12}(3\ln3+4)+O\pth{A^{-\frac12}\lambda^{-\frac14}\sqrt{\log (2A\sqrt\lambda)}}. 
\end{eqnarray*}
\vspace*{-13mm}

\end{proof}

\begin{remark}\label{r:theta6routing1}
  $\Theta_6$-routing in a dense point set runs in two phases, while
  the target is {\em really inside} a cone, it has a similar behavior
  to forward-routing. When it reaches the boundary of a cone, it
  switches between two cones (one odd and one even) and a similar
  analysis can be done. However, measuring the progress along the cone
  boundary instead of the cone bisector gives an expected routing ratio of
 $\tau_2:=\tfrac2{\sqrt3}\tau_1=\tfrac16(3\ln3+4)$.
\end{remark}

\subsection{Routing Ratio of a Side-Routing Phase} 
\label{ss:side}

To analyze the expected routing ratio of a side-routing phase,
we consider the sum of the lengths of all
of its edges. Then, we use the \emph{Slivnyak-Mecke formula}
(refer to~\cite[Corollary 3.2.3]{schneider2008stochastic})
to transform this sum into an integral, from which we get the following lemma.
\begin{lemma}\label{l:side}
  Consider a side-routing phase in the horizontal  direction, starting at the origin until 
  it reaches the line $x=A$. 
  The expected routing ratio of this phase is 
  $ \tau_2+O\pth{A^{-1}\lambda^{-\frac12}}$
  with $ \tau_2:=\frac2{\sqrt3}\tau_1$.
\end{lemma}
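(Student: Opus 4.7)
The plan is to express the expected total length of the side-routing path as a sum over its edges, apply the Slivnyak-Mecke formula, and then divide by the horizontal progress $A$. Each edge $e=pq$ of the side path lying below $\ell$ is a side of a TD-Delaunay triangle $\Delta=pqr$ with $r$ above $\ell$; equivalently, the circumscribing downward-pointing equilateral triangle $\nabla(p,q,r)$ straddles $\ell$, with $r$ on the top horizontal side (above $\ell$) and $p,q$ on the left and right slanted sides (below $\ell$). Fixing this ordering makes the correspondence with path edges bijective. Slivnyak-Mecke then yields
\[
\Ex{\sum_{e\ \text{on path}} |e|}
= \lambda^3 \iiint |pq|\cdot \indicator{\text{valid}}\cdot e^{-\lambda\,\Area{\nabla(p,q,r)}}\,dp\,dq\,dr,
\]
where the validity indicator encodes the positional conditions above together with $p_x,q_x\in[0,A]$.

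The next step is a change of variables: parameterize each valid triple by $(a,b,s,\tau,\upsilon,w)$, with $(a,b)$ the top-left corner of $\nabla$, $s$ its side length, and $\tau,\upsilon,w\in[0,1]$ the positions of $p,q,r$ on their respective sides. A short row-reduction gives the Jacobian $\tfrac34 s^3$. In these coordinates $\Area{\nabla}=\tfrac{\sqrt3}{4}s^2$, $|pq|=\tfrac{s}{2}\sqrt{(2-\tau-\upsilon)^2+3(\tau-\upsilon)^2}$, the validity conditions reduce to $b\in\bigl(0,\tfrac{\sqrt3}{2}s\min(\tau,\upsilon)\bigr)$, and $a$ ranges over an interval of length $A-O(s)$ while $w$ is free in $[0,1]$. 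Integrating out $a$ and $w$ produces the leading factor $A$. Substituting $\beta=2b/(\sqrt3\,s)$ and $\sigma=s\sqrt\lambda$ decouples the integral into a Gaussian-type one-dimensional factor $\int_0^\infty \sigma^5 e^{-\sqrt3\,\sigma^2/4}\,d\sigma=64/(3\sqrt3)$ and a three-variable integral over $(\beta,\tau,\upsilon)\in[0,1]^3$ with $\beta\le\min(\tau,\upsilon)$. Combining all prefactors, the leading term equals $\tau_2\cdot A$ with $\tau_2=\tfrac{1}{6}(3\ln3+4)=\tfrac{2}{\sqrt3}\tau_1$, the value predicted by Remark~\ref{r:theta6routing1}.

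The $O(s)$ shortfall in the $a$-integral and the two endpoint effects at $x=0$ and $x=A$ contribute at most $O(\Ex{s})=O(\lambda^{-1/2})$ to the expected total length, independently of $A$, so after dividing by $A$ the relative error is $O(A^{-1}\lambda^{-1/2})$ as stated. The main obstacle will be the exact evaluation of the three-variable integral over $(\beta,\tau,\upsilon)$: the radical combined with the constrained domain conspires to produce the logarithm $\ln 3$, so one must either reduce to elementary one-dimensional integrals in closed form (for instance via the substitution $x=\tau+\upsilon$, $y=\tau-\upsilon$) or exploit the geometric interpretation in Remark~\ref{r:theta6routing1} of side-routing as forward-routing with progress measured along a cone boundary instead of the bisector, which is precisely the reason for the factor $2/\sqrt3$ relating $\tau_2$ to $\tau_1$.
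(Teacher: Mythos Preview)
Your approach is correct but takes a genuinely different route from the paper. Both arguments apply Slivnyak--Mecke and reach $\tau_2=\tfrac{1}{6}(3\ln 3+4)$, but the paper works with \emph{pairs} rather than triples: it observes that an edge $pq$ lies on the side path if and only if the downward equilateral triangle $\nabla(p,q)$ whose horizontal side sits \emph{exactly on} $\ell$ (with $p$ on its left side, $q$ on its right) is empty. This eliminates the third vertex $r$ entirely and yields a second-moment formula
\[
\Ex{\mathrm{length}}=\lambda^{2}\iint_{p,q} e^{-\lambda\,\Area{\nabla(p,q)}}\,\|pq\|\,dp\,dq,
\]
a four-dimensional integral that factorises after the substitution $(x,r,u,v)\mapsto(p,q)$ with Jacobian $6r^{2}$ into a Gaussian $r$-integral and the two-variable integral $\int_{0}^{1}\!\int_{0}^{1}\sqrt{u^{2}-uv+v^{2}}\,du\,dv$.

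Your third-moment formulation over Delaunay triples $(p,q,r)$ is sound --- each path edge corresponds to a unique straddling triangle, the Jacobian $\tfrac{3}{4}s^{3}$ is right, and your validity range $b\in(0,\tfrac{\sqrt3}{2}s\min(\tau,\upsilon))$ is correct --- but integrating out $w$ and $b$ is exactly what reconstructs the paper's two-point condition: the $w$-integral is trivial and the $b$-integral deposits the weight $\min(\tau,\upsilon)$. After the change $u=1-\tau$, $v=1-\upsilon$ your remaining integral becomes $8\int_{0}^{1}\!\int_{0}^{1}(1-\max(u,v))\sqrt{u^{2}-uv+v^{2}}\,du\,dv$, which one checks (via $v=uw$) equals $\tfrac{2}{3}+\tfrac12\ln 3=\tau_2$, agreeing with the paper. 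So the ``obstacle'' you anticipate is real but no worse than in the paper; what the paper buys with its two-point characterisation is two fewer integration variables and the cleaner integrand without the $\min$ weight.
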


\begin{proof}
Up to a scaling factor of $A$ on the lengths and $A^2$ on the density,
we can assume without loss of generality that $A=1$. Let $\nabla(p,q)$ denote the equilateral triangle
with $p$ on its left side, $q$ on its right side and its horizontal
side supported by the $x$-axis (the triangle is below the $x$-axis).
Let $s=(0,0)$ and $t=(1,0)$. The following analysis sums up the lengths of the edges $pq$ of the \halftsix of $X$,
where $p,q\in\nabla(s,t)$ and $\nabla(p,q)$ is empty.
Depending on the situation,
we may have to add
an edge that connects the path to $s$ and/or to $t$ if these points have been added to the
  point set.
We may also have to add an edge that crosses the boundary of $\nabla(s,t)$.
In any case, the expected length of these edges is
  $O\pth{\frac1{\sqrt\lambda}}$, which is negligible.
  Using Slivnyak-Mecke formula, we have
\[ 
\Ex{\mbox{length}} 
 = 
           \Ex{\sum_{p,q\in X\cap\nabla(s,t)}
\hspace*{-4mm} 
           \indicator{\nabla(p,q)\cap X=\emptyset} \|pq\| }
= 
\lambda^2
\int\!\!\!\!\!\int_{p,q\in \nabla(s,t)}  \hspace*{-8mm}  
e^{\lambda \Area{\nabla(p,q)}} \|pq\|
\dd q \dd p.
\] 
To solve this integral, we first define  $\Phi$ the following variable
substitution
(see Figure~\ref{fig:lemmas-proba-side}).
\begin{eqnarray*}
    &&  \Phi : \R\times \R_{\geq 0}\times[0,1]^2 \rightarrow (\R^2)^2  
  \\  
   &&   (x,r,u,v) \leadsto \pth{\;\pth{
     \begin{array}{cc}x-ru\\ -\sqrt{3}r(1-u) \end{array}
  },\pth{
      \begin{array}{cc}x+rv\\ -\sqrt{3}r(1-v)  \end{array}
}}.  
\end{eqnarray*}

\begin{figure}[t]     \begin{center}
         \includegraphics[page=\ipeFigLemmaSide,width=0.43\textwidth]{./Figures}
         \hspace*{0.1\textwidth}
     \end{center}
     \caption{
For Lemma~\ref{l:side}. 
        \label{fig:lemmas-proba-side}
     }
 \end{figure}

 The Jacobian of $\Phi$ is 
\[
{\rm det}(J_\Phi) =
\absv{ \begin{array}{cccc}
                      1 &   -u                     & -r                & 0              \\
                      0 &   -\sqrt{3}(1-u)   & \sqrt{3}r     & 0               \\
                      1 &    v                      & 0                 & r                \\
                      0 &  -\sqrt{3}(1-v)     &0                 & \sqrt{3}r  
      \end{array}
      } = -6r^2.  
\] 

This variable substitution yields 
\begin{eqnarray*}
\Ex{\mbox{length}} 
& = &
\lambda^2 \int_0^1 \int_0^{\min(x,1-x)} \int_0^1 \int_0^1 
e^{-\lambda r^2\sqrt{3}}  
r  2\sqrt{u^2+v^2-uv}
\cdot \absv{ {\rm det}(J_\Phi) }
\dd v \dd u \dd r \dd x 
\\ & = &
\lambda^2 \pth{\int_0^1 \int_0^{\min(x,1-x)} 
e^{-\lambda r^2\sqrt{3}}   
r \cdot 6 \cdot  r^2 
\dd r \dd x 
}\pth{2
\int_0^1 \int_0^1 
\sqrt{u^2+v^2-uv}\, 
\dd v \dd u }
\\ & = &
\lambda^2  \pth{2 \cdot \int_0^{\frac{1}{2}} \int_0^x 
6 e^{-\lambda r^2\sqrt{3}}    
  \cdot  r^3 
\dd r \dd x 
}\pth{4
\int_0^1 \int_0^u 
\sqrt{u^2+v^2-uv}\, 
\dd v \dd u }
\\ & = &
8\lambda^2 
\cdot
 \frac{1}{4\lambda^2}\pth{  
  2
+ 3^{\frac{3}{4}}\sqrt{\pi}{\rm
         \ erf}\pth{\tfrac{1}{2}\sqrt{\lambda}\sqrt[4]{3}}
         \tfrac{1}{\sqrt{\lambda}}
+ e^{-\frac{1}{4}\lambda\sqrt{3}}
}
\pth{\tfrac{1}{6}+\tfrac{\ln 3}{8}}
\\ & = &
\tfrac16(3\ln3+4)+ O\pth{\tfrac{1}{\sqrt{\lambda}}}
=  \tau_2 + O\pth{\tfrac{1}{\sqrt{\lambda}}}
\simeq 1.2160 + O\pth{\tfrac{1}{\sqrt{\lambda}}}.
\end{eqnarray*}
\vspace*{-12mm}

\end{proof}

Observe that side-routing and $\Theta_6$-routing in the
  neighbourhood of a cone boundary have the same expected routing ratio.
This means that in side-routing, the expected progress made by an  edge of the path
  behaves as if it was independent from the previous edges, although this is
  not the case a priori.

\section{Wrap Up}
\label{s:wrap}

In this section,
we analyze the routing algorithms defined in Sections~\ref{s:basic} and~\ref{s:alternative-neg}
using the tools from Section~\ref{s:prob}.
Recall that these algorithms are made of two phases.
The analysis  is based   on the  following two ingredients.
First,
the splitting
 point
between the two phases belongs to a small region of the plane
with high probability.
Second,
the two phases of each routing algorithm
can be analyzed separately.

\begin{figure}[t]     \begin{center}
    \includegraphics[page=\ipeFigCurves,width=0.4\textwidth]{./Figures}\hfill 
     \end{center}
     \caption{
Theorem~\ref{main-theorem}. 
        \label{fig:theorem}
     }
 \end{figure}

\begin{theorem}\label{main-theorem}
Let $X$ be a Poisson point process with intensity $\lambda$.
Consider the positive and the two alternative negative routing algorithms on the \halftsix defined on $X\cup\{s,t\}$,
where $s$ and $t$ are two points at unit distance.
Figure~\ref{fig:theorem}
 presents a graph of the expected routing ratios of the different routing algorithms as $\lambda$ tends
 to $\infty$
 in terms of $\phi$ the angle of $st$ with the horizontal axis.
 The following table shows the expected stretch for a given $\phi$ and
 also the maximum for all $\phi$ and average
 on $\phi$.
\begin{center}\vbox{
\begin{tabular}{|l|c|c|c|}
   \hline
   Routing  &  $\Ex{\mbox{\rm routing ratio}}(\phi) $& $ \max_{s,t}    \Ex{\mbox{\rm routing ratio}} $ & $ \mathbb{E}_{s,t}[\Ex{\mbox{\rm routing ratio}}] $
 \\\hline 
            Positive routing& $\tau_1     \pth{\sin\phi+\tfrac1{\sqrt{3}}\cos\phi}$ &
             $ \tfrac2{\sqrt3}\tau_1\simeq 1.2160$ & $\tfrac{2\sqrt3}\pi\tau_1   \simeq 1.1612 $
   \\\hline 
  Constant-memory& $\tfrac43\tau_1\sin\phi$ &
                   $\tfrac43\tau_1\simeq1.4041$ & $\tfrac4\pi\tau_1   \simeq 1.3408 $
   \\\hline 
  Memoryless\!\! & $      \tau_1 \pth{ \tfrac32 \sin\phi - \tfrac{\sqrt3}6\cos\phi  } $&
                   $\frac32\tau_1\simeq1.5800$ & $\frac{6-\sqrt3}\pi\tau_1 \simeq 1.4306 $
   \\\hline 
 \end{tabular}
\quad $\tau_1:= \frac1{4\sqrt3}(3\ln3+4)$}
\end{center}
\end{theorem}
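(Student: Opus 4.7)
The plan is to decompose each of the three routing algorithms into two consecutive phases---either a forward-routing phase followed by a side-routing phase (for positive routing and for memoryless negative routing) or two successive side-routing phases (for constant-memory negative routing)---and to estimate the expected length of each phase separately by invoking Lemma~\ref{l:forward} or Lemma~\ref{l:side}. Since $\|st\|=1$, summing these two expected lengths directly produces the expected routing ratio to compare against the table.

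First, for each algorithm I would pin down a deterministic ``ideal'' splitting point $\bar u$, namely the intersection of the cone bisector along which forward-routing proceeds from $s$ with the cone boundary (a side of $T_{ts}$ or $T_{st}$) along which side-routing then proceeds. The tail estimate in Lemma~\ref{l:forward} places the actual splitting vertex $u$ inside an $O(\alpha)$-box around $\bar u$ with probability $1-O(\lambda^{-1/4})$; on the complementary low-probability event the worst-case bounds of Lemmas~\ref{l:positive-wc-length}, \ref{l:memless-wc} and~\ref{l:mem-one-wc} cap the routing ratio by the constant $\tfrac{5}{\sqrt 3}$, so this event contributes only $o(1)$ to the expectation. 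On the high-probability event, the two phases occur in essentially disjoint regions of the plane, and the independence of the Poisson process on disjoint domains lets me treat the two phases separately and add their expected lengths.

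Second, conditional on $\bar u$, Lemma~\ref{l:forward} yields an expected forward-phase length of $\tau_1\,h(\phi)$, where $h(\phi)$ is the projection of $s\bar u$ onto the cone bisector, and Lemma~\ref{l:side}---applied after rotating by a suitable multiple of $\pi/3$, which is a symmetry of both the Poisson process and the \halftsix---yields an expected side-phase length of $\tau_2\,w(\phi)$, where $w(\phi)$ is the projection of $\bar u\,t$ onto the corresponding cone side. It then remains to compute $h(\phi)$ and $w(\phi)$ geometrically for each algorithm: for positive routing with $t\in C_0^s$ and $\cos\phi>0$, elementary trigonometry inside $T_{ts}$ gives $h=\sin\phi-\sqrt 3\cos\phi$ and $w=2\cos\phi$, which combine via $\tau_2=\tfrac{2}{\sqrt 3}\tau_1$ into $\tau_1 h+\tau_2 w=\tau_1\bigl(\sin\phi+\tfrac{1}{\sqrt 3}\cos\phi\bigr)$; analogous pictures of $T_{st}$ and $T_{ts}$ for the two negative-routing algorithms yield the remaining two rows of the table. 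Maximising each formula by routine calculus, and averaging it over a single cone of angular width $\pi/3$ (using the $6$-fold symmetry), then produce the last two columns.

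The main obstacle is the mismatch between the idealised setting of Lemmas~\ref{l:forward} and~\ref{l:side} and what the algorithms actually do: Lemma~\ref{l:forward} stops when the path crosses a horizontal line whereas the algorithms stop when the path exits $T_{ts}$ through a tilted cone boundary, and Lemma~\ref{l:side} is stated for a side-routing phase starting at the origin and ending at a fixed point whereas in the algorithms both endpoints are random vertices close to, but not equal to, $\bar u$ and $t$. Each such discrepancy shifts the endpoints of the corresponding phase by at most $O(\alpha)$, hence alters the number of edges of that phase by at most $O(\alpha\sqrt\lambda)$, and since each edge has expected length $\Theta(\lambda^{-1/2})$ the extra contribution to the routing ratio is $O(\alpha)=o(1)$ as $\lambda\to\infty$. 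Once this technical step is justified, the remainder of the proof is the geometric bookkeeping of $h(\phi)$ and $w(\phi)$ sketched above.
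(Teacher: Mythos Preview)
Your proposal is correct and follows essentially the same approach as the paper: decompose each algorithm into two phases articulated at an ideal splitting point, localize the actual splitting vertex near it via the tail bound of Lemma~\ref{l:forward} (using the worst-case Lemmas~\ref{l:positive-wc-length}--\ref{l:mem-one-wc} on the complementary $O(\lambda^{-1/4})$-event), then combine $\tau_1\|s\bar u\|$ and $\tau_2\|\bar u\,t\|$ via elementary trigonometry to obtain the table, finally maximising and averaging over $\phi\in[\tfrac\pi3,\tfrac\pi2]$. Your explicit discussion of the endpoint-mismatch obstacle and its $O(\alpha)$ cost is in fact slightly more careful than the paper's treatment of the same issue.
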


\begin{proof}
Each of the four routing algorithms we consider is the combination of two phases, each of which can be a forward-routing phase or a side-routing phase.
  These two phases articulate at a splitting point $w\in X$,
  where the
  routing algorithm changes from one phase to the next.
  The actual splitting point is close to an \emph{ideal splitting point} $w_\star$ (which does not belong to $X$)
  that depends only on $s$, $t$ and the routing algorithm.
Therefore, the analysis of each of the four routing algorithms follows the same scheme:
  let the expected routing ratio of the two phases be $\tau$ and $\tau'$, respectively. Since the point $w$ is close to $w_\star$ with high probability,
the expected routing ratio of the routing algorithm is
 $\frac{\|sw_\star\|\tau+\|w_\star t\|\tau'}{\|st\|}$
(see Figure~\ref{fig:proof-theorem}-left). 

\begin{figure}[t]     \begin{center}
\includegraphics[page=\ipeFigProofTh,width=0.99\textwidth]{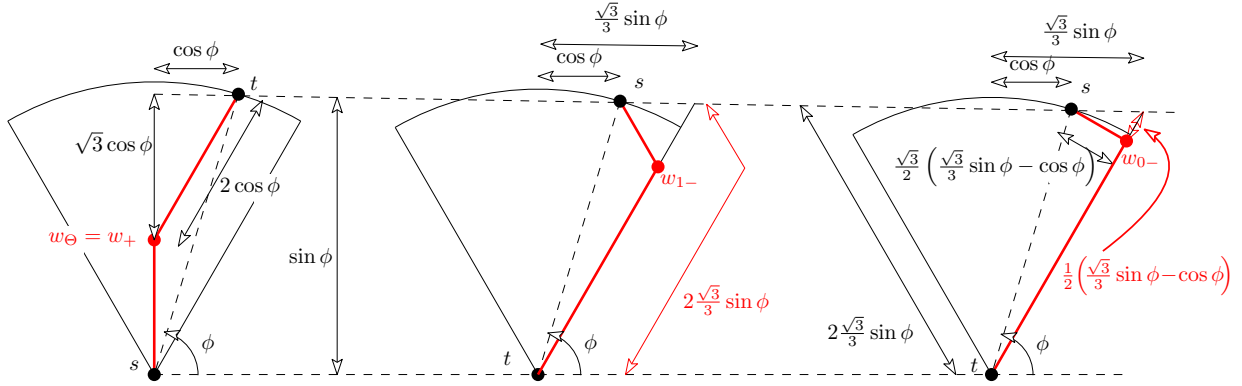} 
\end{center}
     \caption{
For the proof of Theorem~\ref{main-theorem}. 
        \label{fig:proof-theorem}
     }
 \end{figure}

  Positive routing is made of a forward-routing phase from $s$ to $w$ followed by a side 
  phase from $w$ to $t$ (see Section~\ref{s:positive}).
If we fix $s=(0,0)$ and $t=(\cos\phi,\sin\phi)$ for $\phi\in[\frac\pi3,\frac\pi2]$, 
the splitting point 
becomes $w_+=\pth{0,1-\sqrt{3}\sin\phi}$. 
With probability less than $4 \lambda^{-\frac14}$ 
  we bound the routing ratio by the worst case bounds, i.e., between 1
  and 2.
Otherwise, we use the fact that the splitting point $w$ is close to $w_+$.
Let $A=1-\sqrt 3\cos\phi$ and $\alpha=2\sqrt
A\lambda^{-\frac14}\sqrt{\log(2A\sqrt\lambda)}$.
We have
$\|w_+w\|<3\alpha$
with probability greater than  $1- 4 \lambda^{-\frac14}$, by Lemma~\ref{l:forward}.
Moreover, by Lemmas~\ref{l:forward} and~\ref{l:side}, 
$\tau=\tau_1$ and $\tau'=\tau_2$. 
Thus, the expected routing ratio  $\psi:=\Ex{\frac{\tau_1\|sw\|+\tau_2\|w t\|}{\|st\|}} $ satisfies

{\small  \begin{eqnarray*}
\psi           &\leq &
 \Prob{\|w_+w\|\geq\alpha} 2+
\pth{1- \Prob{\|w_+w\|\geq\alpha}}
\pth{ \frac{\tau_1\|sw_+\|+\tau_2\|w_+ t\|}{\|st\|}
 +  \frac{(\tau_1+\tau_2)\|ww_+\|}{\|st\|}}
\\
 \psi          &\geq &
 \Prob{\|w_+w\|\geq\alpha} 1+
\pth{1- \Prob{\|w_+w\|\geq\alpha}}
\pth{ \frac{\tau_1\|sw_+\|+\tau_2\|w_+ t\|}{\|st\|}
 -  \frac{(\tau_1+\tau_2)\|ww_+\|}{\|st\|}}
\\
  \psi         &=&
 \frac{\tau_1\|sw_+\|+\tau_2\|w_+ t\|}{\|st\|}
+ O(\lambda^{-\frac14}\sqrt{\log\lambda})
  \end{eqnarray*}}

And we get as a limit when $\lambda\rightarrow\infty$:
\[
  \frac{\tau_1\|sw_+\|+\tau_2\|w_+ t\|}{\|st\|}  =
  \tau_1 \pth{\sin\phi-\sqrt3\cos\phi} +   \tau_2 2\cos\phi = 
  \tau_1 \pth{\sin\phi+\tfrac1{\sqrt{3}}\cos\phi},  
\]
 whose maximum value is $\tfrac2{\sqrt3}\tau_1\simeq 1.2160$,
obtained for $\phi=\frac{\pi}{3}$. 
Considering any direction equally likely, we average on $\phi$ to get an expected value of 
\[ \frac6\pi\int_{\frac\pi3}^{\frac\pi2}    \tau_1 \pth{\sin\phi+\tfrac1{\sqrt{3}}\cos\phi} \dd\phi
 = \tfrac{2\sqrt3}\pi \tau_1
  \simeq 1.1612 .
\]

\begin{remark}\label{r:theta6routing2}
  By Remark~\ref{r:theta6routing1}, a similar analysis can be done
  for $\Theta_6$-routing with the same 
  splitting point, giving the same expected ratio as with positive
  routing.
\end{remark}



Constant-memory routing is made of a side-routing phase from $s$ to $w$, followed by a side-routing phase from $w$ to $t$ (see Section  \ref{s:mem-one-negative}).
If we fix $t=(0,0)$ and $s=(\cos\phi,\sin\phi)$ for $\phi\in[\frac\pi3,\frac\pi2]$,
the splitting point becomes
$w_{1-}=\pth{\frac{\sqrt3}3\sin\phi-\cos\phi}(1,\frac{\sqrt3}3)$.
By Lemma~\ref{l:side},
$\|w_{1-}w\|<O\pth{\lambda^{-\frac12}}$ with high probability.
Moreover,
by Lemma~\ref{l:side},
 $\tau=\tau'=\tau_2$. 
Thus,
the expected routing ratio when $\lambda\rightarrow\infty$ is 
(see Figure~\ref{fig:proof-theorem}-center). 
\[
  \tau_2 \frac{\|sw_{1-}\|+\|w_{1-} t\|}{\|st\|}=
 \tau_2 2\frac{\sqrt3}3\sin\phi=
   \tfrac43\tau_1\sin\phi,
\]
  whose maximum is $\tfrac43\tau_1\simeq1.4041$,
obtained for $\phi=\frac\pi2$. 
Averaging on $\phi$ yields an expected value of 
\[ \frac6\pi\int_{\frac\pi3}^{\frac\pi2}   \tfrac43\tau_1\sin\phi \dd\phi
= \tfrac{4}{\pi}\tau_1
  \simeq 1.3408.
\]

  Memoryless negative routing is made of a forward-routing phase from $s$ to $w$, followed by a side 
  phase from $w$ to $t$ (see Section~ \ref{s:mem-less-negative}). 
If we fix $t=(0,0)$ and $s=(\cos\phi,\sin\phi)$ for $\phi\in[\frac\pi3,\frac\pi2]$, 
the splitting point 
becomes $w_{0-}$ the orthogonal projection of $s$ on the side of $T_{ts}$
(see Figure~\ref{fig:proof-theorem}-right). 
By Lemma~\ref{l:forward},
$\|w_{0-}w\|<O\pth{\lambda^{-\frac14}\sqrt{\log \lambda}}$
with probability greater than  $1- O( \lambda^{-\frac14})$.
Moreover, by Lemmas~\ref{l:forward} and~\ref{l:side}, $\tau=\tau_1$ and $\tau'=\tau_2$
 Thus,
the limit of the expected routing ratio is 
  {\small
  \begin{eqnarray*}
   \frac{\tau_1\|sw_{0-}\|+\tau_2\|w_{0-} t\|}{\|st\|}
  &=&
      \tau_1 \pth{
      \tfrac{\sqrt3}2\pth{ \frac{\sqrt3}3\sin\phi-\cos\phi}}
      +\tau_2 \pth{
    2\frac{\sqrt3}3\sin\phi-\tfrac12\pth{ \frac{\sqrt3}3\sin\phi-\cos\phi}
}
\\  &=&
      \tau_1 \pth{ \tfrac32 \sin\phi - \tfrac{\sqrt3}6\cos\phi  } ,
\end{eqnarray*}
}

 whose maximum is $\tfrac32\tau_1\simeq1.5800$,
obtained for $\phi=\frac\pi2$. 
Averaging on $\phi$ yields an expected value of 
  \[ \frac6\pi\int_{\frac\pi3}^{\frac\pi2}  
  \tau_1 \pth{ \tfrac32 \sin\phi - \tfrac{\sqrt3}6\cos\phi  }  \dd\phi
  = \tfrac{6-\sqrt3}\pi\tau_1
  \simeq 1.4306  .
\]

Figure~\ref{fig:theorem} 
depicts the expected routing ratios as functions
of $\phi$.
\end{proof}

\subsection*{Acknowledgements}
The authors would like to thank Nicolas Chenavier for interesting discussions related to this paper.

\end{document}